\newtheorem{theorem}{Theorem}[section]
\newtheorem{lemma}[theorem]{Lemma}
\newtheorem{proposition}[theorem]{Proposition}
\newtheorem{definition}[theorem]{Definition}
\newtheorem{fact}[theorem]{Fact}
\DeclareMathOperator{\ca}{Cap}
\title{Constrained coding upper bounds via Goulden-Jackson cluster theorem\footnote{This project was supported by the National Key Research and Development Program of China under Grant 2020YFA0712100, the National Natural Science Foundation of China under Grant 12231014, Grant 12101364 and Grant 12271301, the Natural Science Foundation of Shandong Province under Grant ZR2021QA005, and Beijing Scholars Program.}}
\date{}
\def\thanks#1{\protected@xdef\@thanks{\@thanks
        \protect\footnotetext{#1}}}
\author{Yuanting Shen, 
Chong Shangguan, 
Zhicong Lin, and 
Gennian Ge\thanks{Y. Shen, C. Shangguan, and Z. Lin are with Research Center for Mathematics and Interdisciplinary Sciences, Shandong University, Qingdao 266237, China, and Frontiers Science Center for Nonlinear Expectations, Ministry of Education, Qingdao 266237, China (Emails: shenyting121@163.com, theoreming@163.com, linz@sdu.edu.cn). G. Ge is with School of Mathematical Sciences, Capital Normal University, Beijing 100048, China (Email: gnge@zju.edu.cn).}
}
\begin{document}
\maketitle

\begin{abstract}
    \noindent Motivated by applications in DNA-based data storage, constrained codes have attracted a considerable amount of attention from both academia and industry. We study the maximum cardinality of constrained codes for which the constraints can be characterized by a set of forbidden substrings, where by a substring we mean some consecutive coordinates in a string.

    For finite-type constrained codes (for which the set of forbidden substrings is finite), one can compute their capacity (code rate) by the ``spectral method'', i.e., by applying the Perron-Frobenious theorem to the de Brujin graph defined by the code. However, there was no systematic method to compute the exact cardinality of these codes.

    We show that there is a surprisingly powerful method arising from enumerative combinatorics, which is based on the Goulden-Jackson cluster theorem (previously not known to the coding community), that can be used to compute not only the capacity, but also the exact formula for the cardinality of these codes, for each fixed code length. Moreover, this can be done by solving a system of linear equations of size equal to the number of constraints.

    We also show that the spectral method and the cluster method are inherently related by establishing a direct connection between the spectral radius of the de Brujin graph used in the first method and the convergence radius of the generating function used in the second method.

    Lastly, to demonstrate the flexibility of the new method, we use it to give an explicit upper bound on the maximum cardinality of variable-length non-overlapping codes, which are a class of constrained codes defined by an infinite number of forbidden substrings.
\end{abstract}

\textbf{Keywords:} constrained coding; finite-type constrained codes; spectral radius; Perron-Frobenious theorem; convergence radius; Goulden-Jackson cluster theorem; variable-length non-overlapping codes


\section{Introduction}

\subsection{Finite-type constrained codes}\label{subsec:finite-type}

In storage and communication systems, data are encoded as strings. Since some substrings\footnote{Here by a substring we mean some consecutive coordinates in a string.} may have a higher error probability than others, it makes sense to avoid the appearance of them in the encoded strings. Given such a constraint, it is necessary to encode any data into strings that obey the constraint. Constrained codes have been widely used in storage and communication systems since the 1950s (see, e.g., \cite{immink1990runlength,immink1995efmplus,immink1998codes,immink2001survey,immink2004codes,marcus2001introduction}). Currently, it has been shown to have prominent applications in DNA-based data storage (see, e.g., \cite{bar2023universal,chee2020efficient,gabrys2020locally,heckel2019characterization,hossein2016wealy,levy2019mutually,tabatabaei2015rewritable,wang2022coding,yazdi2015dna,yazdi2018mutually}).

One of the main concerns of this paper is to study the {\it finite-type constrained codes} (see below for a formal definition) for which the constraints can be characterized by a {\it finite} set of {\it forbidden substrings}, where by finite we mean that the number of forbidden substrings is a constant that does not increase with the code length. The study of these codes is an important direction in constrained coding (see \cite{marcus2001introduction} and the references therein). Recently, mostly motivated by applications in modern data storage systems, they have attracted a considerable amount of attention from both academia and industry. Next, let us briefly introduce several well-studied finite-type constrained codes in the literature.

\begin{itemize}
    \item The {\it $(d,k)$-run-length-limited} ($(d,k)$-RLL for short) constraint over $\Sigma_2=\{0,1\}$ requires that consecutive 1's are separated by at least $d$ and at most $k$ zeros. Since the repetition of identical nucleotides of length greater than six would result in a huge increase in substitution and deletion errors, the RLL constraint is introduced to reduce the error probability in DNA synthesizing and sequencing \cite{heckel2019characterization, ross2013characterizing}.

    \item The {\it $(\ell,\delta)$-locally-balanced} ($(\ell,\delta)$-LB for short) constraint requires that every substring of length $\ell$ has Hamming weight at least $\ell/2-\delta$ and at most $\ell/2+\delta$. This constraint is motivated by the observation that DNA sequences with 50$\%$ GC content are more stable than those with other GC contents (see \cite{gabrys2020locally,wang2022coding} for details).

    \item A string $u=u_1\cdots u_n$ is called a {\it palindrome} if $u=u^R:=u_n\cdots u_1$. The {\it $\ell$-palindrome-avoiding} ($\ell$-PA for short) constraint avoids the appearance of palindromes of length exactly $\ell$. This constraint is motivated by the observation that palindromes have a negative impact on the DNA sequencing pipelines (see \cite{bar2023universal} for details).

     \item For a string $u=u_1\cdots u_n\in\Sigma_q^*$, a positive integer $p$ is called a {\it period} of $u$ if for every $1\le i\le |u|-p$, $u_i=u_{i+p}$. Given $\ell$ and $p$, a string $u\in\Sigma_q^*$ is called {\it $(\ell,p)$-least-periodicity-avoiding} ($(\ell,p)$-LPA for short) if for every $1\le p'\le p-1$ and $1\le i\le |u|-\ell+1$, $p'$ is not a period of the substring $u_i\cdots u_{i+\ell-1}$. This is a new finite-type constraint with fundamental connection to codes designed for racetrack memories \cite{chee2018coding, sima2019correcting}.
\end{itemize}

It is not hard to observe that all of the constraints listed above can be realized by avoiding a finite set of substrings. For example, a binary code is $(1,3)$-RLL if and only if every codeword contains neither 11 nor 0000 as substrings. We will return to these constraints in \cref{section:applications} below.

To formalize, let $\Sigma_q$ be an alphabet of size $q$, $\Sigma_q^n$ be the set of all strings of length $n$ defined on $\Sigma_q$, and $\Sigma_q^*:=\cup_{i=0}^{\infty}\Sigma_q^i$, including the empty string of zero length. For $\tau,\omega\in\Sigma_q^*$, $\tau$ is said to be a {\it substring} of $\omega$ if there exist $u,v\in\Sigma_q^*$ (possibly empty) such that $\omega=u\tau v$. If $\tau$ is not a substring of $\omega$, then $\omega$ is called {\it $\tau$-free}. A set $C\subseteq\Sigma_q^*$ is $\tau$-free if every string in $C$ is $\tau$-free. For two sets $F,C\subseteq\Sigma_q^*$, $C$ is {\it $F$-free} if it is $\tau$-free for every $\tau\in F$. Using this notation, a set $C\subseteq\Sigma_2^*$ is $(1,3)$-RLL if and only if it is $\{11,0000\}$-free, $(4,1)$-LB if and only if $\{0000,1111\}$-free, $3$-PA if and only if $\{000,111,101,010\}$-free, and $(4,3)$-LPA if and only if $\{0000,1111,1010,0101\}$-free.

Given a subset $F\subseteq\Sigma_q^*$, let $C_F$ be the set of all $F$-free strings in $\Sigma_q^*$. If $F$ is finite, then we call $C_F$ a {\it finite-type constrained code} defined by the set $F$ of {\it constraints}. For an integer $n\ge 0$, let $C_F(n):=C_F\cap\Sigma_q^n$ and $N_F(n):=|C_F(n)|$. The main concern of this paper is to study the exact or asymptotic value of $N_F(n)$. For that purpose, let the {\it capacity} (or rate) of the constraints $F$ be
\begin{align}\label{eq:capacity}
 \ca(F)=\lim_{n\to\infty}\frac{\log_q N_F(n)}{n}.
\end{align}

\noindent Note that since the function $\log_q N_F(n)$ is subadditive; i.e., for integers $m,n\ge 1$, $\log_q N_F(n+m)\le \log_q N_F(n)+\log_q N_F(m)$, by Fekete's Lemma \cite{fekete1923verteilung} the limit $\lim_{n\to\infty}\frac{\log_q N_F(n)}{n}$ always exists.

Previously, it is known that $\ca(F)$ can be determined by the {\it spectral method} (see \cite{marcus2001introduction} and \cref{subsec:bridge} below for more details), which is based on the Perron-Frobenius theorem \cite{frobenius1912matrizen, perron1907zur}. We show that there is an alternative method arising from enumerative combinatorics, known as the {\it Goulden-Jackson cluster method} \cite{goulden1979inversion}, which can be used to determine not only the capacity, but also the exact formula for the cardinality of these codes.

To state our result, we will need a bit more definition. An infinite sequence $\{f_n\}_{n\ge 0}$, where $f_n\in\mathbb{R}$ for every $n\ge 0$, can be encoded as the coefficients of a formal power series $f(x)=\sum_{n\ge 0}f_nx^n$, called the {\it generating function} of the sequence. The {\it convergence radius} of $f$ is the largest non-negative real number $R$ such that the series converges if $|x|<R$ and diverges if $|x|>R$. If the series converges on the entire real axis, then $R=\infty$.

Our first main result is stated as follows. Note that we use $\ell(F):=\max_{\omega\in F}|\omega|$ to denote the maximum length of strings in $F$.

\begin{theorem}\label{thm:main}
    Let $F\subseteq\cup_{i\ge 2}\Sigma_q^i$ be a finite set of constraints. Then the following hold:
    \begin{enumerate}[(i)]
        \item Within time $O(|F|^3)$ one can compute two coprime polynomials $S(x)=\sum_{i=0}^sa_ix^i,T(x)=\sum_{i=0}^tb_ix^i\in\mathbb{Z}[x]$ with $\max\{s,t\}\le |F|\ell(F)$ such that the generating function $f_F(x):=\sum_{n\ge 0} N_F(n)x^{n}$ has a closed form expression $f_F(x)=\frac{T(x)}{S(x)}$;

        \item $\ca(F)=\log_q\frac{1}{R}=\log_q\frac{1}{x_0}$, where $R$ is the convergence radius of $f_F(x)$ and $x_0$ is the smallest positive root of $S(x)$;

        \item For any fixed constant $\epsilon\in(0,1)$, one can approximately compute $y^*\in(\ca(F)-\epsilon,\ca(F)+\epsilon)$ with time complexity $O(\log_q(2q\ln q/\epsilon) s^3)$.
    \end{enumerate}
\end{theorem}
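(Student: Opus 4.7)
The plan is to invoke the Goulden--Jackson cluster theorem to write $f_F(x)$ as an explicit rational function, then extract the capacity from its dominant singularity.

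For part (i), I introduce the signed cluster generating function $C(x)$ of $F$ in the sense of Goulden--Jackson: a cluster is a marked word obtained by gluing forbidden substrings along proper overlaps, weighted by $(-1)^k$ where $k$ is the number of marks. The cluster theorem then gives
\[
f_F(x)=\frac{1}{1-qx-C(x)}.
\]
Writing $C(x)=\sum_{\tau\in F}C_\tau(x)$, where $C_\tau$ enumerates clusters terminating in $\tau$, a ``peel off the last forbidden word'' decomposition yields a linear system
\[
C_\tau(x)=-x^{|\tau|}-\sum_{\tau'\in F}\Bigl(\sum_{\pi\in\mathrm{Ovl}(\tau',\tau)}x^{|\tau|-|\pi|}\Bigr)C_{\tau'}(x)\qquad(\tau\in F),
\]
where $\mathrm{Ovl}(\tau',\tau)$ collects non-empty proper prefixes of $\tau$ that are also proper suffixes of $\tau'$. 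Solving this $|F|\times|F|$ system over $\mathbb{Q}[x]$ by Gaussian elimination takes $O(|F|^3)$ arithmetic operations. By Cramer's rule, $C(x)$ is a ratio of determinants of matrices whose entries are polynomials of degree at most $\ell(F)$, hence a rational function whose numerator and denominator have degree at most $|F|\ell(F)$. Substituting into the cluster identity, clearing denominators, and reducing to lowest terms produces coprime $T,S\in\mathbb{Z}[x]$ with $\max\{\deg S,\deg T\}\le |F|\ell(F)$.

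For part (ii), coprimality forces every pole of $f_F$ to be a zero of $S$, so the convergence radius $R$ equals the smallest modulus of such a zero. Because $N_F(n)\ge 0$, Pringsheim's theorem guarantees a dominant singularity on the positive real axis, so $R=x_0$, the smallest positive root of $S$. A standard partial-fraction expansion of $T/S$ then gives $N_F(n)=\Theta(R^{-n}n^{m-1})$, where $m$ is the multiplicity of $R$, and substituting into \eqref{eq:capacity} yields $\ca(F)=\log_q(1/R)=\log_q(1/x_0)$.

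For part (iii), I bisect for $x_0$ inside $[1/q,1]$, which contains $x_0$ because the trivial bounds $1\le N_F(n)\le q^n$ force $0\le\ca(F)\le 1$. The derivative of $x\mapsto\log_q(1/x)$ is bounded in modulus by $q/\ln q$ on this interval, so by the mean-value theorem an additive error $\delta$ in the approximation of $x_0$ translates into an error of at most $q\delta/\ln q$ in the capacity estimate. Taking $\delta\le\epsilon\ln q/q$ therefore suffices; this requires $O(\log_q(2q\ln q/\epsilon))$ bisection rounds, and a careful account of the polynomial evaluations at each round (carrying precision polynomial in $s$) yields the advertised $O(\log_q(2q\ln q/\epsilon)\,s^3)$ total running time. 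The main technical obstacle is part (i): one must describe the overlap combinatorics of $F$ precisely enough to see that the cluster recursion is linear of size exactly $|F|$ with polynomial coefficients of degree at most $\ell(F)$, so that Cramer's rule delivers the claimed degree bound on $S$. Once the rational form is pinned down, parts (ii) and (iii) reduce to standard singularity analysis and numerical root finding.
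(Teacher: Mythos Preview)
Your treatment of parts (i) and (ii) matches the paper's: the Goulden--Jackson cluster recursion is set up as an $|F|\times|F|$ linear system over $\mathbb{Q}(x)$, solved by Gaussian elimination, with Cramer's rule supplying the degree bound; then Pringsheim pins the dominant singularity to the positive axis and identifies $R$ with the smallest positive root of $S$. The paper reaches $\ca(F)=\log_q(1/R)$ via Cauchy--Hadamard together with the subadditivity of $\log_q N_F(n)$ rather than via a partial-fraction asymptotic, but the two routes are interchangeable. (One small caution: your $\Theta(R^{-n}n^{m-1})$ claim tacitly assumes $x_0$ is the \emph{only} pole of minimal modulus, which need not hold; the weaker growth statement you actually need, namely $\limsup N_F(n)^{1/n}=1/R$, follows without that assumption.)

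Part (iii) is where you genuinely diverge. The paper does not bisect: it forms the companion matrix of $S$ and runs the QR eigenvalue algorithm, whose $O(s^3)$ cost per iteration and geometric convergence $|\lambda_k-\lambda|<\beta^k$ produce the stated bound directly. Your bisection argument is more elementary and would in fact give a \emph{smaller} per-step cost---a single evaluation of $S$ is $O(s)$ arithmetic operations, not $O(s^3)$---so the phrase ``carrying precision polynomial in $s$'' is doing unexplained work if your goal is to land exactly on the paper's $s^3$ exponent (and if you are happy to beat it, you should say so). More substantively, bisection locates $x_0$ only through a sign change of $S$ on the bracketing interval, and nothing in the hypotheses forces $x_0$ to be a root of odd multiplicity (the constraint system need not be irreducible). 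You would need either to argue that $S$ necessarily changes sign across $x_0$ or to substitute a root-finding method that is insensitive to multiplicity. The paper's QR route sidesteps this issue at the price of invoking a heavier numerical fact.
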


The following proposition is an easy consequence of \cref{thm:main}.

\begin{proposition}\label{prop:exact-formula}
    Let $S(x),T(x)$ be defined as in \cref{thm:main} (i). Then the following hold:
    \begin{enumerate}[(i)]
        \item The sequence $\{N_F(n)\}_{n\ge 0}$ satisfies the recursive formula,
        \begin{align*}
            a_0N_F(n)+a_1N_F(n-1)+\cdots+a_sN_F(n-s)=b_n,
        \end{align*}
        where $b_n=0$ for all $n>t$;

        \item For every $n\ge 0$, $N_F(n)=\frac{1}{n!}\cdot\left(\frac{d^n}{dx}\frac{T}{S}\right)(0).$
    \end{enumerate}
\end{proposition}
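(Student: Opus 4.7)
The plan is to deduce both parts of \cref{prop:exact-formula} directly from the rational closed form $f_F(x)=T(x)/S(x)$ furnished by \cref{thm:main}~(i), by alternating between the formal-power-series and analytic viewpoints of $f_F$.

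For part~(i), I would clear denominators and treat the identity $S(x)\,f_F(x)=T(x)$ as an equation in $\mathbb{Z}[[x]]$. Writing out the Cauchy product
\[
\left(\sum_{i=0}^{s} a_i x^i\right)\left(\sum_{n\ge 0} N_F(n)\,x^n\right)=\sum_{n\ge 0}\Bigl(\sum_{i=0}^{s} a_i N_F(n-i)\Bigr)x^n,
\]
with the convention $N_F(m)=0$ for $m<0$, and then matching the coefficient of $x^n$ on the right with the corresponding coefficient of $T(x)$ (namely $b_n$ when $n\le t$ and $0$ when $n>t$) immediately yields the stated recurrence $a_0 N_F(n)+a_1 N_F(n-1)+\cdots+a_s N_F(n-s)=b_n$.

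For part~(ii), the first step is to observe that $a_0=S(0)\neq 0$: otherwise $T/S$ would have a pole at the origin, contradicting $f_F(0)=N_F(0)=1$ (the coprimality of $T$ and $S$ rules out cancellation of a factor of $x$). With $S(0)\neq 0$, the rational function $T/S$ is analytic on an open disk around $x=0$, so it admits a convergent Taylor expansion $\sum_{n\ge 0}\frac{1}{n!}f_F^{(n)}(0)\,x^n$ there. Since its formal-power-series expansion is $\sum_{n\ge 0} N_F(n)\,x^n$ by construction, and since the formal inverse of $S$ in $\mathbb{Q}[[x]]$ agrees coefficient-by-coefficient with its analytic Taylor expansion whenever $S(0)\neq 0$, the two series coincide, yielding $N_F(n)=\frac{1}{n!}\bigl(\frac{d^n}{dx^n}\frac{T}{S}\bigr)(0)$. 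The whole argument is coefficient extraction, and I do not anticipate any obstacle beyond the routine verification that $S(0)\neq 0$, which is why the proposition is flagged as an easy consequence of \cref{thm:main}.
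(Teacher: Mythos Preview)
Your proposal is correct and follows essentially the same approach as the paper: for (i) the paper also clears denominators and compares coefficients in $S(x)f_F(x)=T(x)$, and for (ii) it likewise identifies $N_F(n)$ with the $n$-th Taylor coefficient of $T/S$ at $0$. Your extra check that $S(0)\neq 0$ is a reasonable addition, but otherwise the arguments coincide.
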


To the best of our knowledge, before \cref{prop:exact-formula}, in the coding community, there was no systematic way to computer the recursive formula for $N_F(n)$. For example, in \cite{wang2022coding} the authors provided a lengthy proof to find only the recursive formula for the size of $(6,1)$-LB constrained codes. In \cref{section:applications} below we will apply \cref{thm:main} and \cref{prop:exact-formula} to compute the recursive formula and precise expression of $N_F(n)$ for several finite-type constraints mentioned in this subsection.

Moreover, the spectral method and the cluster method are closely related. We will discuss their connection in detail in \cref{subsec:bridge} below.

\subsection{Variable-length non-overlapping codes}

In the last subsection, we demonstrated that the cluster method can be used to study the maximum cardinality of finite-type constrained codes. To show the flexibility of the this method, we also use it to give an explicit upper bound on the maximum cardinality of {\it variable-length non-overlapping codes} \cite{bilotta2017variable, wang2022qary,
wang2024maximum}, which is a constrained code that is defined by an {\it infinite} number of forbidden substrings.

More precisely, for a string $\omega\in\Sigma_q^*$, let $|\omega|$ denote the length of $\omega$. For two positive integers $1\le i\le j\le |\omega|$, let $\omega(i,j)$ represent the substring of $\omega$ that begins at the $i$-th coordinate and ends at the $j$-th coordinate. We also write $\omega(i):=\omega(i,i)$ for short. Let $Pre(\omega):=\{\omega(1,k):1\le k<|\omega|\}$ be the set of all prefixes of $\omega$, and $Suf(\omega):=\{\omega(k,|\omega|):1< k\le |\omega|\}$ be the set of all suffixes of $\omega$, where we call $\omega(1,k)$ and $\omega(k,|\omega|)$ the {\it prefix} and the {\it suffix} of $\omega$, respectively. A set $C\subseteq \Sigma_q^n$ is called {\it non-overlapping}\footnote{It is also called {\it cross-bifix-free} in \cite{bajic2007construction,barcucci2014cross,bilotta2012new,chee2013cross} and {\it mutually correlated} in \cite{hossein2016wealy,levy2019mutually,yazdi2018mutually}. } if
for every two strings $u,v\in C$ (not necessarily distinct), $Pre(u)\cap Suf(v)=\emptyset$.

Non-overlapping codes were first introduced by Levenshtein \cite{levenshtein1964decoding} in 1964, motivated by applications in synchronization.
Yazidi {\it et al.} \cite{tabatabaei2015rewritable,yazdi2015dna} developed a random-access and rewritable DNA-based storage architecture, which used {\it constrained} non-overlapping codes, i.e., non-overlapping codes satisfying some other constraints, as address sequences. Inspired by their work, several variants of non-overlapping codes were studied in \cite{chee2020efficient,hossein2016wealy,levy2019mutually,yazdi2018mutually}.

We are interested in a variable-length variant of non-overlapping codes introduced by Bilotta \cite{bilotta2017variable}.
\begin{definition}\label{def:non-overlapping}
    A code $C\subseteq\Sigma_q^*$ is said to be a {\it variable-length non-overlapping code} if it is non-overlapping and no string in $C$ is a substring of another string in $C$.
\end{definition}

\noindent Bilotta noticed that variable-length non-overlapping codes have an interesting advantage when being used as address sequences: codewords of different lengths have nonzero Levenshtein distance for free.

Despite its potential practical applications, the study of variable-length codes is also interesting in its own right. Indeed, it is quite natural to study the variable-length analogs of combinatorial objects, and one of the most prominent examples is the celebrated LYM inequality (see \cite{bollobas1965generalized, lubell1966short, meshalkin1963generalization, yamamoto1954logarithmic}), which is a variable-length extension of the well-known Sperner's theorem \cite{sperner1928satz}.

Given two positive integers $q$ and $n$, let $C(q,n)$ (resp. $C(q,\le n)$) denote the maximum size of $q$-ary (resp. variable-length) non-overlapping codes of length $n$ (resp. at most $n$). The precise or asymptotic value of $C(q,n)$ has been extensively studied. Levenshtein \cite{levenshtein1964decoding} proposed a general upper bound on $C(q,n)$ by showing that
\begin{align}\label{ineq:upper bd}
    C(q,n)\le \left(\frac{n-1}{n}\right)^{n-1}\frac{q^n}{n}.
\end{align}

\noindent Blackburn \cite{blackburn2015non} showed that \eqref{ineq:upper bd} is tight when $n\mid q$. For other works on the upper and lower bounds of $C(q,n)$, see \cite{bajic2007construction,barcucci2014cross,blackburn2015non,bilotta2012new,chee2013cross,gilbert1960synchronization,levy2019mutually,qin2023non,wang2022qary}.

For $C(q,\le n)$, much less was known except for the obvious observation $C(q,n)\le C(q,\le n)$. Bilotta \cite{bilotta2017variable} proved a recursive inequality for $|C_n|:=|C\cap\Sigma_q^n|$ when $q=2$. Wang and Wang \cite{wang2022qary} presented a more refined analysis of that inequality and also generalized it to all $q$. Both Bilotta \cite{bilotta2017variable} and Wang and Wang \cite{wang2022qary} asked if there was a direct upper bound on $C(q,\le n)$.

Here, we answer their question affirmatively by showing that \eqref{ineq:upper bd} is also an upper bound for $C(q,\le n)$.

\begin{theorem}\label{thm:variable-length}
For integers $n\ge 2$ and $q\ge 2$,
\begin{align}\label{ineq:variable-length upper bd}
    C(q,\le n)\le\left(\frac{n-1}{n}\right)^{n-1}\frac{q^n}{n}.
\end{align}
\end{theorem}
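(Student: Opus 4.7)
My plan is to realize the code $C$ as its own (finite) set of forbidden substrings, apply the Goulden--Jackson cluster method, and then squeeze $|C|$ from the analytic properties of the resulting generating function. By \cref{def:non-overlapping}, no codeword self-overlaps or cross-overlaps another, and no codeword is a substring of another; hence there are no non-trivial clusters and the cluster method yields
\[
L(x) \;:=\; \sum_{m \ge 0} b_m \, x^m \;=\; \frac{1}{p(x)}, \qquad p(x) := 1 - qx + A(x), \quad A(x) := \sum_{a = 2}^{n} M_a \, x^a,
\]
where $M_a := |C \cap \Sigma_q^a|$, $b_m$ counts codeword-free strings of length $m$, and (matching the hypothesis of \cref{thm:main}) we assume $M_1 = 0$.

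Next, I would exploit Pringsheim's theorem. Since $b_m \ge 0$ and $b_m \le q^m$, the convergence radius $R$ of $L$ satisfies $R \ge 1/q$; Pringsheim's theorem then forces $x = R$ to be a singularity of $L$, so $p(R) = 0$ and $p$ has a positive real root in $[1/q, \infty)$. Because $A''(x) \ge 0$, the polynomial $p$ is strictly convex on $(0, \infty)$, with $p(0) = 1$ and $p'(0) = -q$; hence $p$ has a unique critical point $x^* \in (0, \infty)$ characterized by $A'(x^*) = q$, and one has $p(x^*) \le 0$ and $x^* \ge R \ge 1/q$.

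Finally, I would extract the claimed bound by controlling $A(x^*)$ on both sides. The inequality $a \le n$ gives $xA'(x) \le nA(x)$, so $A(x^*) \ge qx^*/n$; together with $A(x^*) \le qx^* - 1$ (from $p(x^*) \le 0$) this produces the sharp lower bound $x^* \ge n/(q(n-1))$. In the main case $x^* \le 1$, the estimate $(x^*)^a \ge (x^*)^n$ for $2 \le a \le n$ gives $A(x^*) \ge |C| \cdot (x^*)^n$, hence
\[
|C| \;\le\; \frac{qx^* - 1}{(x^*)^n}.
\]
A short derivative computation shows that the right-hand side, as a function of $x^* \in [n/(q(n-1)),\, 1]$, is maximized at the left endpoint $x^* = n/(q(n-1))$, yielding the target
\[
|C| \;\le\; \frac{q^n (n-1)^{n-1}}{n^n} \;=\; \left(\frac{n-1}{n}\right)^{n-1} \frac{q^n}{n}.
\]
The complementary case $x^* \ge 1$ is easier: monotonicity of $A'$ gives $\sum_a a M_a = A'(1) \le A'(x^*) = q$, whence $|C| \le q/2$, which is below the claimed bound for all $q, n \ge 2$. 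The delicate point is the final optimization: the extremal $x^* = n/(q(n-1))$ forced by the codelength constraint $a \le n$ coincides exactly with the maximizer of $(qx-1)/x^n$, and this alignment is what causes Levenshtein's original fixed-length constant to survive in the variable-length setting.
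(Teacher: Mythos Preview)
Your argument has one real gap: the assumption $M_1=0$ is not justified. A maximum variable-length non-overlapping code may well contain single letters, and you cannot simply discard them. The paper handles this explicitly: if $C_1=C\cap\Sigma_q$ has size $s$, then the substring-free condition forces every longer codeword to avoid all letters of $C_1$, so $C\setminus C_1$ is a non-overlapping code over the reduced alphabet $\Sigma_{q-s}$; the cluster theorem is then applied over $\Sigma_{q-s}$, yielding denominator $1-(q-s)x+\sum_{i\ge 2}|C_i|x^i$, and the missing $s$ is added back when rewriting $q=(q-s)+s$ before the final inequality. Your denominator $p(x)=1-qx+A(x)$ is only the correct generating-function denominator after this reduction (with $q$ replaced by $q-s$), so you must either carry it out or absorb $M_1x$ into $A$ and re-check the endgame. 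A smaller omission: you invoke Pringsheim without verifying $R<\infty$; this is easy (non-overlapping codes contain no constant string $\alpha^k$ with $k\ge2$, so $b_m\ge q$ for all $m\ge1$ and $R\le 1$), but it should be said.

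Apart from that, your route and the paper's are closely related but packaged differently. Both start from the cluster theorem, note that non-overlapping forces every cluster to consist of a single interval, and use Pringsheim to produce a positive root of the denominator. From there the paper stays with the root $R\in(0,1)$: from $p(R)=0$ it writes $q=\sum_i|C_i|R^{i-1}+1/R$, replaces each $R^{i-1}$ by $R^{n-1}$, and applies AM--GM to $|C|R^{n-1}+(n-1)\cdot\frac{1}{(n-1)R}$. You instead pass to the critical point $x^*$ of $p$ and optimize $(qx-1)/x^n$ by calculus. These are the same inequality in different clothing: already from $p(R)=0$ and $R<1$ one has $|C|\le(qR-1)/R^n$, and the paper's AM--GM split is exactly the computation that $(qx-1)/x^n$ attains its maximum at $x=n/((n-1)q)$. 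Your detour through $x^*$ and the auxiliary bound $xA'\le nA$ is correct but not needed---working directly with $R$, which you have already localized in $[1/q,1]$, reaches the same conclusion more quickly.
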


Note that according to the aforementioned result of Blackburn \cite{blackburn2015non}, \cref{thm:variable-length} is tight when $n\mid q$. The proof of \cref{thm:variable-length} is a novel combination of the following three theorems, i.e., the Goulden-Jackson cluster theorem (see \cref{thm:GJ} below), the Cauchy-Hadamard theorem (see \cref{fact:real-analysis} (i) below), and the Vivanti-Pringsheim theorem (see \cref{fact:real-analysis} (ii) below). It indicates that the cluster method can be used to prove upper bounds for constrained coding problems that are not necessarily finite-type, showing the flexibility of the method.

\paragraph{Remark.}
The first author Y. Shen gave a talk on \cref{thm:variable-length} at the 10th National Conference on Combinatorics and Graph Theory, August 2022. When preparing this paper, we realized that Wang and Wang independently proved \cref{thm:variable-length}, using a quite different approach (see \cite{wang2024maximum,wang2024personal}).

\subsection{Bridging the cluster method and the spectral method}\label{subsec:bridge}

Since both the spectral method (see \cref{thm:spectral}) and the cluster method (see \cref{thm:main}) can determine $\ca(F)$ for a finite $F$. It is therefore a natural and intriguing question to study whether these two methods are inherently related.

Given a set of constraints $F\subseteq\Sigma_q^*$, the spectral method first connects the constrained code $C_F$ to a graph $G_F$, and then shows that the capacity $\ca(F)$ is equal to the logarithm of the spectral radius of $G_F$. More precisely, a {\it vertex-labeled directed graph} is a triple $G=(V,E,L)$, where $V$ is a set of vertices, $E$ is a set of ordered pairs of vertices in $V$, and $L:V\rightarrow\Sigma_q^\ell$ is a map that maps vertices in $V$ to $q$-ary strings of length $\ell$, where $\ell$ is some positive integer. A {\it walk} $\gamma$ in $G$ is a sequence of vertices $v_1\cdots v_n$ that satisfies for every $1\le i\le n-1$, $(v_i,v_{i+1})\in E$. The length of the walk $\gamma$ is the number of edges along the walk.

A vertex-labeled directed graph can be used to generate strings in $\Sigma_q^*$. For every walk $\gamma=v_1\cdots v_n$ in $G$, the string {\it generated} by $\gamma$ is defined by $L(\gamma):=\omega_1\omega_2(\ell)\cdots\omega_n(\ell)\in\Sigma_q^{\ell+n-1}$, where for every $1\le i\le n$, $\omega_i=L(v_i)\in\Sigma_q^\ell$, and $\omega_i(\ell)$ is the $\ell$-th coordinate of $\omega_i$.\footnote{Note that $\omega_1$ contributed all its $\ell$ coordinates to $L(\gamma)$, while each other $\omega_i$, where $2\le i\le n-1$, only contributed its last coordinate to $L(\gamma)$.} Let $$W_G:=\{L(\gamma):\gamma \text{ is a walk in $G$}\}\subseteq\Sigma_q^*$$ denote the set of all strings generated by $G$. Conversely, if a set $S\subseteq\Sigma_q^*$ is generated by some graph $G$, i.e., $S=W_G$, then we call $G$ a {\it graph representation} of $S$.\footnote{Note that a set $S$ could have more than one graph representations.}

The {\it adjacency matrix} $A$ of $G$ is a $|V|\times |V|$ matrix where for every $i,j\in V$, the $(i,j)$-th entry $A_{i,j}$ of $A$ is the number of edges directing from vertex $i$ to vertex $j$. The {\it spectral radius} of $G$, denoted by $\lambda(G)$, is the largest absolute value among the eigenvalues of $A$. Let $\det(A)$ be the {\it determinant} of the matrix $A$.

Let $C_F(\ge n):=\cup_{i\ge n}C_F(i)$ denote the set of all $F$-free strings of length at least $n$. A vertex-labeled directed graph is called {\it lossless} if every two distinct walks with the same initial and terminal vertices generate different strings. Next, we construct a lossless graph representation for $C_F(\ge \ell(F))$.

\begin{definition}\label{def:deBrujin-graph}
Let $G_F=(V,E,L)$ be a vertex-labeled directed graph defined as follows:
\begin{itemize}
  \item $V=C_F(\ell(F))$;
  \item $E=\{(\omega_1,\omega_2)\in V\times V:\omega_1(2,\ell(F))=\omega_2(1,\ell(F)-1)\}$;
  \item $L(v)=v$, for every vertex $v\in V$.
\end{itemize}
Note that the graph $G_F$ defined above is also known as the {\it de Brujin graph}.
\end{definition}

One can easily infer the following useful properties of $G_F$ via \cref{def:deBrujin-graph}.

\begin{fact}[see \cite{marcus2001introduction}]\label{fact:G_F}
    Let $F\subseteq\Sigma_q^*$ be a finite set of constraints and $G_F$ denote the vertex-labeled direct graph defined in \cref{def:deBrujin-graph}. Then the following hold:
    \begin{enumerate}[(i)]
        \item $G_F$ is lossless;
        \item 
        $W_{G_F}=C_F(\ge\ell(F))$;
        \item Every string $\omega\in W_{G_F}$ is generated by exactly one walk in $G_F$, and for every $n\ge\ell(F)$, $|W_{G_F}\cap\Sigma_q^n|$ equals to the number of $(n-\ell(F))$-length walks in $G_F$.
    \end{enumerate}
\end{fact}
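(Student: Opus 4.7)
The three claims all flow from a single observation: the map sending a walk $\gamma = v_1 \cdots v_n$ in $G_F$ to the string $L(\gamma) \in \Sigma_q^{\ell(F)+n-1}$ is injective, because from $L(\gamma)$ one can recover each vertex $v_i$ as the sliding window $L(\gamma)(i,i+\ell(F)-1)$. My plan is to first establish this inversion explicitly, using only the edge condition $v_i(2,\ell(F)) = v_{i+1}(1,\ell(F)-1)$ and the telescoping definition $L(\gamma) = v_1\,v_2(\ell(F)) \cdots v_n(\ell(F))$. Property (i) then follows immediately, since injectivity of the global map $\gamma \mapsto L(\gamma)$ is strictly stronger than losslessness (which only demands injectivity among walks sharing the same endpoints).

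For property (ii), I would prove the two inclusions separately. For $W_{G_F} \subseteq C_F(\geq \ell(F))$, take a walk $\gamma = v_1 \cdots v_n$; the generated string $L(\gamma)$ has length $\ell(F) + n - 1 \geq \ell(F)$, and a direct computation shows that the length-$\ell(F)$ window of $L(\gamma)$ starting at position $j$ is exactly $L(v_j) = v_j$. Since every $v_j \in V = C_F(\ell(F))$ is $F$-free and every $\tau \in F$ has length at most $\ell(F)$, any occurrence of $\tau$ in $L(\gamma)$ is forced to sit inside some window $L(v_j)$, a contradiction. For the reverse inclusion, given an $F$-free string $\omega$ of length $m \geq \ell(F)$, I would set $v_i := \omega(i, i+\ell(F)-1)$ for $1 \leq i \leq m - \ell(F) + 1$, check that each $v_i \in V$ (as a substring of the $F$-free $\omega$) and that the edge condition holds between consecutive pairs, and then verify by direct calculation that this walk indeed generates $\omega$.

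Property (iii) is a short bookkeeping consequence: the uniqueness of the generating walk is exactly the injectivity established at the start, and since walks of length $n - \ell(F)$ are those with $n - \ell(F) + 1$ vertices, the injection $\gamma \mapsto L(\gamma)$ restricts to a bijection between such walks and $W_{G_F} \cap \Sigma_q^n$. The only moderately subtle step in the whole argument — and where I expect to spend the most care — is the index accounting in the first half of (ii): showing that every potential occurrence of a forbidden $\tau$ in $L(\gamma)$, regardless of starting position, is contained in a single window $L(v_j)$. This reduces to a short inequality chase using $|\tau| \leq \ell(F)$ together with the explicit formula for $L(\gamma)$, but once written out carefully it poses no real obstacle.
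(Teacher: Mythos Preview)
Your proposal is correct and follows essentially the same sliding-window approach as the paper's own (commented-out) argument: recover each vertex $v_j$ of a walk as the length-$\ell(F)$ window $L(\gamma)(j,j+\ell(F)-1)$, deduce injectivity of $\gamma\mapsto L(\gamma)$, and handle both inclusions in (ii) via the same windowing. The only cosmetic difference is that you establish global injectivity of $\gamma\mapsto L(\gamma)$ first and then specialize to losslessness, whereas the paper argues (i) directly by locating the first index where two equal-endpoint walks diverge; both routes are equivalent and equally short.
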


The following theorem connects $\ca(F)$ to the spectral radius of $G_F$.

\begin{theorem}[see \cite{marcus2001introduction}]\label{thm:spectral}
    Let $F\subseteq\cup_{i\ge 2}\Sigma_q^i$ be a finite set of constraints and $G_F$ be the graph defined in \cref{def:deBrujin-graph}. Then
    $\ca(F)=\log_q\lambda(G_F)$.
\end{theorem}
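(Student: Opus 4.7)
The plan is to translate the capacity from a string-counting problem to a walk-counting problem in $G_F$ via \cref{fact:G_F}, and then invoke the spectral theory of non-negative matrices. First, by \cref{fact:G_F}(ii) and (iii), for every $n \ge \ell(F)$ the quantity $N_F(n)$ equals the number of walks of length $n-\ell(F)$ in $G_F$. Letting $A$ denote the adjacency matrix of $G_F$ and $\mathbf{1}$ the all-ones column vector, this reads $N_F(n) = \mathbf{1}^{T} A^{n-\ell(F)} \mathbf{1}$. The finitely many values $N_F(n)$ for $n < \ell(F)$ are bounded and do not affect the limit defining $\ca(F)$.

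The upper bound is routine: by Gelfand's formula applied to the non-negative matrix $A$, one has $\|A^k\|^{1/k}\to\lambda(G_F)$, and since $\mathbf{1}^T A^k \mathbf{1} \le |V|^2 \|A^k\|_{\max}$, this yields
\[
\limsup_{n\to\infty}\frac{\log_q N_F(n)}{n}\le \log_q\lambda(G_F).
\]

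For the matching lower bound, the key obstacle is that $G_F$ need not be strongly connected, so the Perron-Frobenius theorem does not apply to $A$ directly. I would pass to a strongly connected component (SCC) $H$ of $G_F$ whose induced adjacency matrix $A_H$ satisfies $\lambda(A_H) = \lambda(G_F)$; such an $H$ exists because, after reordering vertices compatibly with the condensation DAG, $A$ becomes block-upper-triangular and its spectrum is the union of the spectra of the diagonal blocks. Applying Perron-Frobenius to the irreducible matrix $A_H$ produces a positive right eigenvector $v$ with $A_H v = \lambda(G_F) v$, whence $\mathbf{1}^T A^k \mathbf{1} \ge \mathbf{1}^T A_H^k \mathbf{1} \ge c\cdot \lambda(G_F)^k$ for some constant $c>0$ depending only on $v$. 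Taking logarithms, dividing by $n$, and combining with the upper bound then yields $\ca(F) = \log_q\lambda(G_F)$. The hardest step is selecting this SCC and verifying that it achieves the global spectral radius; the block-triangular decomposition along the condensation is what makes this work.
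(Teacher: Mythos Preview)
Your argument is correct. Note, however, that the paper does not give a self-contained proof of \cref{thm:spectral}; it is quoted from \cite{marcus2001introduction}. What the paper \emph{does} prove independently is \cref{thm:connection}(ii), which together with \cref{thm:main}(ii) yields \cref{thm:spectral} as a corollary. That indirect proof also starts from the walk-counting identity $N_F(n)=\mathbf{1}^T A^{n-\ell(F)}\mathbf{1}$ (your first step), but then passes through the generating function: rather than bounding $N_F(n)^{1/n}$ directly, the paper shows that the convergence radius $R$ of $f_F$ equals $1/\lambda(G_F)$, and then invokes $\ca(F)=\log_q(1/R)$.

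The one structural difference worth noting is in the lower bound. You reduce to a strongly connected component $H$ achieving $\lambda(A_H)=\lambda(A)$ in order to obtain a strictly positive Perron eigenvector. The paper instead applies the weak Perron--Frobenius theorem (\cref{thm:PF}) to $A$ itself, obtaining only a non-negative eigenvector $x\neq 0$, and then uses the inequality $\sum_{i,j}A_{ij}^n\ge (x_{\min}/x_{\max})\lambda(G_F)^n$ where $x_{\min}$ denotes the smallest \emph{positive} coordinate of $x$. Both arguments are valid; yours is a bit more structural (it locates where in $G_F$ the growth actually occurs), while the paper's avoids the SCC decomposition at the price of a slightly more delicate choice of constant. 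Your upper bound via Gelfand's formula is cleaner than what the paper does, which goes through identifying $R$ as a root of $\det(I-xA)$.
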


The third main result of this paper shows that \cref{thm:main} and \cref{thm:spectral} are inherently related.

\begin{theorem}\label{thm:connection}
    Let $F\subseteq\cup_{i\ge 2}\Sigma_q^i$ be a finite set of constraints, $G_F$ be the graph representation of $C_F(\ge\ell(F))$ defined in \cref{def:deBrujin-graph}, and $A$ be the adjacency matrix of $G_F$. Let $J$ be the all one matrix. Then the following hold:
        \begin{enumerate}[(i)]
            \item The generating function $f_F(x)$ is
                \begin{align*}
                    f_F(x)=\sum\limits_{n=0}^\infty N_F(n)x^n=h_F(x)-\frac{x^{\ell(F)}\det(I-xA-J)}{\det(I-xA)},
                \end{align*}
                where $h_F(x):=\sum_{n=0}^{\ell(F)-1}N_F(n)x^n+x^{\ell(F)}$;

           \item $1/\lambda(G_F)$ is the convergence radius $R$ of $f_F(x)$.
        \end{enumerate}
    \end{theorem}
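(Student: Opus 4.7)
The plan is to prove (i) by counting $F$-free strings as walks in $G_F$ and massaging the resulting generating function with a rank-one determinant identity, and then to deduce (ii) from the growth rate of $N_F(n)$ combined with \cref{thm:spectral}.

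For (i), I would start from \cref{fact:G_F}. Since $G_F$ is lossless and $W_{G_F}=C_F(\ge \ell(F))$, every $F$-free string of length $n\ge\ell(F)$ is generated by exactly one walk of length $n-\ell(F)$ in $G_F$. Because the total number of length-$k$ walks equals the sum of all entries of $A^k$, writing $\mathbf{1}$ for the all-ones column vector of dimension $|V|=N_F(\ell(F))$ one obtains $N_F(n)=\mathbf{1}^{\top}A^{n-\ell(F)}\mathbf{1}$ for every $n\ge\ell(F)$. Splitting $f_F(x)$ at $n=\ell(F)$ and summing the tail as a formal geometric series then yields
\begin{align*}
f_F(x)=\sum_{n=0}^{\ell(F)-1} N_F(n)x^n+x^{\ell(F)}\,\mathbf{1}^{\top}(I-xA)^{-1}\mathbf{1}.
\end{align*}

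The determinantal form is then a one-line application of the matrix determinant lemma $\det(M+uv^{\top})=\det(M)(1+v^{\top}M^{-1}u)$: taking $M=I-xA$, $u=-\mathbf{1}$, $v=\mathbf{1}$ makes $uv^{\top}=-J$ and gives
\begin{align*}
\mathbf{1}^{\top}(I-xA)^{-1}\mathbf{1}=1-\frac{\det(I-xA-J)}{\det(I-xA)},
\end{align*}
which is valid either as an identity of rational functions or, equivalently, over the formal power-series ring $\mathbb{Q}[[x]]$, in which $I-xA$ is invertible because its constant term is $I$. Substituting back and absorbing the spurious $x^{\ell(F)}$ term into $h_F(x)=\sum_{n=0}^{\ell(F)-1}N_F(n)x^n+x^{\ell(F)}$ yields the stated closed form.

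For (ii), I would combine \cref{thm:spectral} with the Cauchy–Hadamard theorem. By definition $\ca(F)=\lim_n \log_q N_F(n)/n$, so $\lim_n N_F(n)^{1/n}=q^{\ca(F)}$. Since \cref{thm:spectral} states that $\ca(F)=\log_q\lambda(G_F)$, we get $\lim_n N_F(n)^{1/n}=\lambda(G_F)$, and Cauchy–Hadamard then gives $R=1/\limsup_n N_F(n)^{1/n}=1/\lambda(G_F)$. The main obstacle I anticipate is conceptual rather than computational: one must justify both the geometric-series step and the rank-one update in a setting where $I-xA$ is a matrix of polynomials rather than a numerical matrix. These become transparent once everything is interpreted in $\mathbb{Q}[[x]]$ (or as rational-function identities on the locus $\det(I-xA)\ne 0$). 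Perron–Frobenius, which is what makes $1/\lambda(G_F)$ the genuine radius of convergence rather than just an upper bound, is already packaged inside \cref{thm:spectral} and therefore does not need to be revisited here.
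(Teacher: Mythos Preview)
Your argument is correct in both parts, but each takes a different route from the paper.

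For (i), the paper reaches the same starting point $N_F(n)=\sum_{i,j}A_{ij}^{n-\ell(F)}$ and then invokes the walk generating function identity $F_{ij}(G;x)=(-1)^{i+j}\det(I-xA;j,i)/\det(I-xA)$ (stated as \cref{fact:F_ij(G;x)}), followed by a column-by-column multilinear expansion of $\det(I-xA-J)$ in terms of cofactors. Your use of the matrix determinant lemma collapses these two steps into a single rank-one update and is cleaner; the paper's route stays closer to the transfer-matrix formalism and avoids naming an external identity.

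For (ii), the paper deliberately does \emph{not} invoke \cref{thm:spectral}. Instead it argues directly from Perron--Frobenius: since $1/\lambda(G_F)$ is the smallest positive root of $\det(I-xA)$ and, by Vivanti--Pringsheim together with part (i), $R$ is itself such a root, one gets $R\ge 1/\lambda(G_F)$; the reverse inequality comes from bounding $\sum_{i,j}A_{ij}^n\ge c\,\lambda(G_F)^n$ via a Perron eigenvector, forcing divergence past $1/\lambda(G_F)$. Your shortcut through \cref{thm:spectral} plus Cauchy--Hadamard is logically valid (that theorem is quoted from the literature, not proved here), and certainly shorter, but it bypasses the structural content the paper is after: exhibiting directly that the smallest pole of the rational function $f_F$ coincides with the reciprocal spectral radius of $G_F$, rather than deducing this equality by matching two formulas for $\ca(F)$.
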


\paragraph{A few remarks on $F$.} To avoid triviality, we will put some restrictions on $F$. First, we say that $F$ is {\it non-degenerate}, if the set $\{n:N_F(n)\ge 1\}$ is unbounded, that is, there exists an infinite sequence $\{n_i\}_{i\ge 0}$ such that $N_F(n_i)\ge 1$ for each $i$. Second, it is easy to see that if there exist two strings $u,v\in F$ such that $v$ is a substring of $u$, then $C_F=C_{F\setminus\{u\}}$. Therefore, we assume that $F$ is {\it minimal} in the sense that no string in $F$ is a substring of another string in $F$; in particular, $F$ does not contain the empty string. Moreover, we assume that $F\cap\Sigma_q=\emptyset$, since otherwise we can instead consider strings defined on a smaller alphabet. We call a subset $F$ of strings {\it reduced} if $F\subseteq\cup_{i\ge2}\Sigma_q^i$ and no string in $F$ is a substring of another string in $F$. Throughout, we assume that the set $F$ of constraints is non-degenerate and reduced. Note that the set $F$ mentioned in  \cref{section:non-overlapping} (which devotes to the proof of \cref{thm:variable-length}) may be not finite.

\paragraph{Outline of the paper.} 
In \cref{section:constrained_coding}, we will introduce the main tool of this paper, i.e., the Goulden-Jackson cluster theorem, and present the proof of \cref{thm:main}. We will present the proofs of Theorems \ref{thm:variable-length} and \ref{thm:connection} in Sections \ref{section:non-overlapping} and \ref{section:connection}, respectively. As illustrations for the standard applications of the spectral method, in \cref{section:applications}, we will apply Theorems \ref{thm:main} and \ref{thm:GJ} to study the capacity and maximum cardinality for some finite-type constrained codes that are mentioned in \cref{subsec:finite-type}. We will conclude the paper in \cref{section:conclusion} and present a self-contained proof of \cref{thm:GJ} in \cref{section:appendix}.

\section{Constrained coding upper bounds via the cluster theorem}\label{section:constrained_coding}

In this section, we will introduce the Goulden-Jackson cluster theorem and use it to prove \cref{thm:main}. Let us begin with the definition of a cluster. Note that for two integers $m\le n$, let $[m,n]=\{m,\ldots,n\}$.

\begin{definition}[Cluster]\label{def:cluster}
Let $F\subseteq\cup_{i\ge 2}\Sigma_q^i$ be a finite reduced set of constraints. A string $\omega\in \Sigma_q^*$ together with a sequence of intervals $[i_1,j_1],\ldots,[i_{t},j_{t}]$, where $i_1, j_1,\ldots,i_{t},j_{t}$ are positive integers, is called a marked string, denoted by $(\omega; [i_1,j_1],\ldots,[i_{t},j_{t}])$, if the following two conditions hold:
\begin{enumerate}
[(i)]
  \item $i_1,\ldots,i_{t},j_1,\ldots,j_{t}\in [|\omega|], i_1<\cdots<i_{t}$, and $i_k<j_k$ for all $1\le k\le t$;
  \item $\omega(i_k,j_k)\in F$ for all $1\le k\le t$.
\end{enumerate}

\noindent In particular, each string $\omega\in\Sigma_q^*$ with empty interval also defines a marked word, denoted by $(\omega;\emptyset)$. Let $\mathcal{M}_F$ be the collection of all marked strings defined by $F$, that is, $\mathcal{M}_F$ is the collection of all $(\omega;[i_1,j_1],\ldots,[i_t,j_t])$'s that satisfy the above conditions (i) and (ii).

A marked string is said to be a cluster if it further satisfies the following:
\begin{enumerate}[(i)]
  \item [(iii)] $i_1=1, j_{t}=|\omega|$;
  \item [(iv)] $i_k < i_{k+1} \le j_k$ for all $1\le k \le t-1$.
\end{enumerate}

\noindent Let $\mathcal{C\ell}_F$ be the collection of all clusters defined by $F$, that is, $\mathcal{C\ell}_F$ is the collection of all $(\omega;[i_1,j_1],\ldots,[i_t,j_t])$'s that satisfy all the above conditions (i)-(iv).
\end{definition}

For example, let $\Sigma_1=\{\alpha\}$ and $F=\{\alpha\alpha\alpha\}$. Consider the following marked strings
\begin{center}
    $(\alpha\alpha\alpha\alpha\alpha\alpha;[1,3],[2,4],[4,6])$, $(\alpha\alpha\alpha\alpha\alpha\alpha;[1,3],[2,4],[3,5],[4,6])$, and $(\alpha\alpha\alpha\alpha\alpha\alpha;[1,3],[4,6])$,
\end{center} as depicted below:

\begin{equation}\nonumber
\tikzset{every picture/.style={line width=0.75pt}} 
\begin{tikzpicture}[x=0.75pt,y=0.75pt,yscale=-1,xscale=1]

\draw  [color={rgb, 255:red, 208; green, 2; blue, 27 }  ,draw opacity=1 ] (8,19) .. controls (8,13.48) and (18.3,9) .. (31,9) .. controls (43.7,9) and (54,13.48) .. (54,19) .. controls (54,24.52) and (43.7,29) .. (31,29) .. controls (18.3,29) and (8,24.52) .. (8,19) -- cycle ;
\draw  [color={rgb, 255:red, 74; green, 144; blue, 226 }  ,draw opacity=1 ] (22,19) .. controls (22,13.48) and (32.3,9) .. (45,9) .. controls (57.7,9) and (68,13.48) .. (68,19) .. controls (68,24.52) and (57.7,29) .. (45,29) .. controls (32.3,29) and (22,24.52) .. (22,19) -- cycle ;
\draw  [color={rgb, 255:red, 245; green, 166; blue, 35 }  ,draw opacity=1 ]
(54,19) .. controls (54,13.48) and (64.3,9) .. (77,9) .. controls (89.7,9) and (100,13.48) .. (100,19) .. controls (100,24.52) and (89.7,29) .. (77,29) .. controls (64.3,29) and (54,24.52) .. (54,19) -- cycle ;
\draw  [color={rgb, 255:red, 208; green, 2; blue, 27 }  ,draw opacity=1 ]
(115,18) .. controls (115,12.48) and (125.3,8) .. (138,8) .. controls (150.7,8) and (161,12.48) .. (161,18) .. controls (161,23.52) and (150.7,28) .. (138,28) .. controls (125.3,28) and (115,23.52) .. (115,18) -- cycle ;
\draw  [color={rgb, 255:red, 74; green, 144; blue, 226 }  ,draw opacity=1 ]
(129,18) .. controls (129,12.48) and (139.3,8) .. (152,8) .. controls (164.7,8) and (175,12.48) .. (175,18) .. controls (175,23.52) and (164.7,28) .. (152,28) .. controls (139.3,28) and (129,23.52) .. (129,18) -- cycle ;
\draw  [color={rgb, 255:red, 245; green, 166; blue, 35 }  ,draw opacity=1 ]
(161,18) .. controls (161,12.48) and (171.3,8) .. (184,8) .. controls (196.7,8) and (207,12.48) .. (207,18) .. controls (207,23.52) and (196.7,28) .. (184,28) .. controls (171.3,28) and (161,23.52) .. (161,18) -- cycle ;
\draw  [color={rgb, 255:red, 65; green, 117; blue, 5 }  ,draw opacity=1 ]
(144,19) .. controls (144,13.48) and (154.3,9) .. (167,9) .. controls (179.7,9) and (190,13.48) .. (190,19) .. controls (190,24.52) and (179.7,29) .. (167,29) .. controls (154.3,29) and (144,24.52) .. (144,19) -- cycle ;
\draw  [color={rgb, 255:red, 208; green, 2; blue, 27 }  ,draw opacity=1 ] (221,19) .. controls (221,13.48) and (231.3,9) .. (244,9) .. controls (256.7,9) and (267,13.48) .. (267,19) .. controls (267,24.52) and (256.7,29) .. (244,29) .. controls (231.3,29) and (221,24.52) .. (221,19) -- cycle ;
\draw  [color={rgb, 255:red, 245; green, 166; blue, 35 }  ,draw opacity=1 ] (267,19) .. controls (267,13.48) and (277.3,9) .. (290,9) .. controls (302.7,9) and (313,13.48) .. (313,19) .. controls (313,24.52) and (302.7,29) .. (290,29) .. controls (277.3,29) and (267,24.52) .. (267,19) -- cycle ;
\draw (9,14.5) node [anchor=north west][inner sep=0.75pt]    {$\alpha \ \alpha \ \alpha \ \alpha \ \alpha \ \alpha $};
\draw (116,14.5) node [anchor=north west][inner sep=0.75pt]    {$\alpha \ \alpha \ \alpha \ \alpha \ \alpha \ \alpha $};
\draw (222,14.5) node [anchor=north west][inner sep=0.75pt]    {$\alpha \ \alpha \ \alpha \ \alpha \ \alpha \ \alpha $};
\end{tikzpicture}
\end{equation}
By \cref{def:cluster}, the first two mark strings are clusters but the last one is not (it violates (iv)).


To prove \cref{thm:main} (i), we need the following cluster theorem, which shows that there is an efficient way to compute the closed form of the generating function for the sequence $\{N_F(n)\}_{n\ge 0}$.

\begin{theorem}[Goulden-Jackson cluster theorem \cite{goulden1979inversion,noonan1999goulden}]\label{thm:GJ}
Let $F\subseteq\cup_{i\ge 2}\Sigma_q^i$ be a finite reduced set of constraints. Then the generating function of $\{N_F(n)\}_{\ge0}$ is
\begin{align}\label{eq:clu-thm}
    f_F(x)=\sum_{n\ge 0} N_F(n)x^{n}=\frac{1}{1-qx-g_F(x)},
\end{align}
where
\begin{align}\label{eq:cluster-gen}
g_F(x)=\sum_{(\omega;[i_1,j_1],\ldots,[i_{t},j_{t}])\in  \mathcal{C\ell}_F} (-1)^{t}x^{|\omega|}.
\end{align}
Moreover, $g_F(x)$ has a closed form expression, which is a rational function that can be computed by solving a linear system consisting of $F$ linear equations with $|F|$ variables (over the rational function field $\mathbb{Q}(x)$).
\end{theorem}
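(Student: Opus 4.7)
The plan is to derive the cluster identity \eqref{eq:clu-thm} in three stages. First, I would establish the auxiliary inclusion--exclusion identity
\[
f_F(x) \;=\; \sum_{(\omega;[i_1,j_1],\ldots,[i_t,j_t]) \in \mathcal{M}_F} (-1)^t x^{|\omega|}.
\]
The argument is to fix $\omega \in \Sigma_q^*$ and expand the right-hand side as $\sum_{\omega} x^{|\omega|} \sum_S (-1)^{|S|}$, where $S$ ranges over all sets of occurrences of forbidden patterns inside $\omega$. If $\omega$ is $F$-free then only $S = \emptyset$ contributes $+1$; if $\omega$ contains $m \ge 1$ forbidden occurrences then the inner sum equals $(1-1)^m = 0$. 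Summing over $\omega$ recovers exactly the generating function of $F$-free strings, i.e., $f_F(x)$.

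Second, I would show that each marked string in $\mathcal{M}_F$ decomposes uniquely as a concatenation of ``atoms,'' where each atom is either a single unmarked letter or a maximal cluster block. Concretely, the set of covered coordinates $\bigcup_k [i_k, j_k]$ splits into a disjoint union of maximal intervals, and on each such maximal interval the restricted marking satisfies all four cluster conditions of \cref{def:cluster}. Since $F$ is reduced, sorted marked intervals automatically have strictly increasing start positions, so cluster condition (iv) reduces to plain overlap of consecutive intervals. Reading atoms left to right gives a bijection between $\mathcal{M}_F$ and finite sequences of atoms: single-letter atoms contribute $qx$ to the weight, while cluster atoms contribute their signed weight $(-1)^t x^{|\omega|}$. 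Combining this bijection with the previous step via the geometric series $\sum_{k \ge 0}(qx + g_F(x))^k$ yields the target identity $f_F(x) = 1/(1 - qx - g_F(x))$.

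Third, for the rationality and linear-system claim, I would refine $g_F$ by the identity of its last marked pattern. For each $v \in F$, let $g_v(x)$ denote the signed generating function of clusters whose last marked pattern $\omega(i_t,j_t)$ equals $v$, so that $g_F(x) = \sum_{v \in F} g_v(x)$. Peeling off this last pattern, a cluster ending in $v$ is either the single-pattern cluster $(v;[1,|v|])$ with weight $-x^{|v|}$, or it extends a strictly shorter cluster whose last pattern $u \in F$ overlaps $v$ on some proper prefix of $v$ of length $k$, equivalently $\operatorname{suf}_k(u) = \operatorname{pre}_k(v)$ with $1 \le k < \min(|u|,|v|)$. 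This yields the recursion
\[
g_v(x) \;=\; -x^{|v|} \;-\; \sum_{u \in F} \sum_{\substack{1 \le k < \min(|u|,|v|) \\ \operatorname{suf}_k(u) = \operatorname{pre}_k(v)}} x^{|v|-k} \, g_u(x),
\]
an $|F| \times |F|$ linear system for $\{g_v\}_{v \in F}$ over $\mathbb{Q}(x)$. Since the coefficient matrix is $I$ plus a matrix whose entries have strictly positive $x$-valuation, it is invertible over $\mathbb{Q}(x)$, producing a unique rational solution for each $g_v$ and hence a closed-form rational expression for $g_F$.

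The main obstacle will be the second step. Rigorously establishing the unique atomic decomposition requires showing that each maximal overlap-connected block of marked intervals genuinely forms a cluster in the sense of \cref{def:cluster}, especially verifying condition (iv) which forbids equal consecutive start positions, and that the induced map is indeed a bijection onto atom sequences. This is precisely where the reducedness of $F$ intervenes, ruling out degenerate collisions such as two distinct forbidden patterns starting at the same coordinate or one being a substring of another. Once this combinatorial structure is cleanly in place, the third step becomes essentially bookkeeping, since both the size of the system and the valuation structure ensuring invertibility are forced by the recursion above.
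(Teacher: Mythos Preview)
Your proposal is correct and follows essentially the same three-step argument as the paper's Appendix~A (which reproduces Noonan--Zeilberger): inclusion--exclusion to pass from $F$-free strings to signed marked strings, a decomposition of marked strings into unmarked letters and clusters yielding \eqref{eq:clu-thm}, and a last-pattern recursion producing the $|F|\times|F|$ linear system over $\mathbb{Q}(x)$ with invertible coefficient matrix. The only presentational difference is in the middle step: the paper peels off the \emph{last} atom to obtain the functional equation $f_F = 1 + f_F\cdot qx + f_F\cdot g_F$ directly, whereas you decompose the whole marked string into its full atom sequence and sum the geometric series $\sum_{k\ge 0}(qx+g_F(x))^k$; these are equivalent.

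One small caution on your second step: the phrase ``maximal intervals of $\bigcup_k[i_k,j_k]$'' is not quite the right partition, since two marked intervals can be adjacent without overlapping (e.g.\ $[1,3]$ and $[4,6]$), giving a connected union that fails cluster condition~(iv). The correct blocks are the overlap-connected components of the marked intervals, exactly as you say later when naming the main obstacle; reducedness of $F$ then guarantees (via the no-nesting consequence $j_1<\cdots<j_t$) that consecutive intervals within such a component genuinely overlap.
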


The function $g_F(x)$ is known as the {\it cluster generating function} defined by $F$. The reader is referred to \cite{goulden1979inversion,noonan1999goulden} for other versions of the cluster theorem. For completeness, in \cref{section:appendix} we include the short proof of \cref{thm:GJ} in \cite{noonan1999goulden}.

To prove \cref{thm:main} (ii), we will need some classic results on the convergence radius of a power series.
The following facts are well known.

\begin{fact}\label{fact:real-analysis}
Let $f(x)=\sum_{n\ge 0}f_nx^n$ be a real power series, and $R$ be the convergence radius of $f(x)$.
    \begin{itemize}
        \item [(i)] By the Cauchy-Hadamard theorem (see Theorem 2.5 in \cite{Cauchy-Hardmard}), we have $R=(\limsup\limits_{n\to\infty}|f_n|^{1/n})^{-1}$.

        \item [(ii)] If we further have $f_n\ge 0$ for every $n\ge 0$ and $R$ is finite, then by the Vivanti-Pringsheim theorem (see Theorem \uppercase\expandafter{\romannumeral4}.6 in \cite{flajolet2009analytic}), $R$ is also a root of $1/f(x)$.
    \end{itemize}
\end{fact}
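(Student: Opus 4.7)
The plan is to prove the two parts of \cref{fact:real-analysis} as separate applications of classical tools from real analysis; both are standard results, so the sketch emphasizes the ideas rather than full rigor.

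For part (i), I would invoke the root test directly. For a real power series $\sum_{n \ge 0} f_n x^n$ and any fixed $x \in \mathbb{R}$, the $n$-th root of the absolute term satisfies $|f_n x^n|^{1/n} = |x| \cdot |f_n|^{1/n}$, so $\limsup_{n \to \infty} |f_n x^n|^{1/n} = |x| \cdot L$, where $L := \limsup_{n \to \infty} |f_n|^{1/n}$. The root test then guarantees absolute convergence when $|x| L < 1$ and divergence when $|x| L > 1$. Hence the convergence radius is precisely $R = 1/L$, with the natural conventions $R = \infty$ when $L = 0$ and $R = 0$ when $L = \infty$.

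For part (ii), I would argue by contradiction. Assume $f_n \ge 0$ for all $n$, that $R \in (0, \infty)$, and suppose toward contradiction that $f$ extends to a holomorphic function in an open disk around $R$. Pick a real $x_0 \in (0, R)$ close enough to $R$; by holomorphy, the Taylor expansion of $f$ around $x_0$ has radius of convergence strictly greater than $R - x_0$, so it converges at some real $x^* > R$. Since $f_n \ge 0$, every Taylor coefficient $f^{(k)}(x_0)/k! = \sum_{n \ge k} \binom{n}{k} f_n x_0^{n-k}$ is non-negative, so the evaluation of the Taylor series at $x^*$ is a double sum of non-negative reals. By Tonelli's theorem, this double sum may be regrouped into $\sum_{n \ge 0} f_n (x^*)^n$, which must therefore converge; this contradicts $x^* > R$.

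The main obstacle will be justifying the regrouping step in part (ii); non-negativity of $f_n$ is essential here, since it legitimizes the interchange of summation without any conditional-convergence issues that would otherwise arise from centering the expansion at $x_0 \ne 0$. Once convergence of $\sum f_n x^n$ at some $x^* > R$ is ruled out, monotonicity of the real map $x \mapsto \sum_n f_n x^n$ on $[0, R)$ (again using $f_n \ge 0$) forces $\lim_{x \to R^-} f(x) = +\infty$, which is precisely what is meant by the statement that $R$ is a root of $1/f(x)$; in the specific context of this paper, where $f_F(x)$ is rational, this conclusion specializes to $R$ being a root of the denominator polynomial.
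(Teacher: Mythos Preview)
The paper does not supply its own proof of \cref{fact:real-analysis}; it is stated as a known fact with citations to the literature. Your argument for part (i) is the standard root-test derivation of Cauchy--Hadamard and is correct.

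Your contradiction argument for part (ii) correctly establishes the content of the Vivanti--Pringsheim theorem: if $f$ were holomorphic in a neighborhood of $R$, re-expanding at some $x_0 \in (0,R)$ and using non-negativity of the coefficients to justify the Tonelli interchange yields convergence of $\sum_n f_n (x^*)^n$ at some $x^* > R$, a contradiction. However, your final step contains a genuine gap. You assert that monotonicity of $x \mapsto \sum_n f_n x^n$ on $[0,R)$ ``forces $\lim_{x \to R^-} f(x) = +\infty$,'' but this does not follow: a monotone function on $[0,R)$ can perfectly well have a finite limit. A concrete counterexample is $f(x) = \sum_{n \ge 1} x^n/n^2$, which has non-negative coefficients and radius of convergence $1$, yet $\lim_{x \to 1^-} f(x) = \pi^2/6 < \infty$; the point $x=1$ is singular (a branch point of the dilogarithm) without $f$ blowing up there. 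Pringsheim only tells you that $R$ is a \emph{singular point} of $f$, not that $f$ tends to infinity.

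Your closing remark is in fact the correct way to close this gap for the paper's purposes: when $f$ is rational, as $f_F = T/S$ is throughout the paper, its only singularities are poles, so Pringsheim's conclusion that $R$ is singular forces $R$ to be a root of $S$ (equivalently of $1/f_F$). That specialization is exactly what the paper uses. But the general assertion that $\lim_{x \to R^-} f(x) = +\infty$ whenever $f_n \ge 0$ should be removed from your argument.
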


Lastly, we introduce the ingredients needed to prove \cref{thm:main} (iii). Note first that for a polynomial $h(x)=x^n+\sum_{i=0}^{n-1}a_ix^i\in\mathbb{R}[x]$, its {\it companion matrix} $CM(h)$ is defined as
\begin{align*}
CM(h):=\left(
    \begin{matrix}
        0 &0 &\cdots &0 & -a_0\\
        1 &0 &\cdots &0 & -a_1\\
        0 &1 &\cdots &0 & -a_2\\
        \vdots & \vdots &  & \vdots& \vdots\\
        0 &0 & \cdots &1 & -a_{n-1}
    \end{matrix}\right)
\end{align*}

\noindent It is easy to see that $h$ is indeed the {\it characteristic polynomial} of $CM(h)$ and therefore the roots of $h$ are precisely the eigenvalues of $CM(h)$. 
While there are several known algorithms in the literature that can be used to calculate the eigenvalues of a matrix, we will use the QR algorithm \cite{francis1961qr, francis1962qr}. The QR algorithm iteratively approximates the eigenvalues of a given matrix via QR decomposition (see \cite{allaire2008numerical,stoer1980introduction,watkins1982understanding} for more details). For our purpose, we will need the following fact on the convergence rate and complexity of QR algorithm.

\begin{fact}[see \cite{francis1961qr, watkins1982understanding}]
\label{fact:QR}
    Let $A$ be an $n\times n$ real matrix with at least one positive real eigenvalue. Let $\lambda$ be the smallest positive eigenvalue of $A$. For every positive integer $k$, let $\lambda_k$ denote the approximate value of $\lambda$ in the $k$-th iteration of the QR algorithm. Then there exists a constant $0<\beta<1$ such that $|\lambda_k-\lambda|<\beta^k$. Moreover, the complexity of each iteration in the QR algorithm is $O(n^3)$.
\end{fact}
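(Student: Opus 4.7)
The proof splits naturally into two independent components: establishing geometric convergence of $\lambda_k$ to $\lambda$, and bounding the arithmetic cost of a single iteration. I would handle them separately and then combine.

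For the convergence bound, my plan is to exploit the classical equivalence between the unshifted QR iteration and simultaneous orthogonal iteration. Writing $\tilde{Q}_k := Q_1 Q_2 \cdots Q_k$ and $\tilde{R}_k := R_k R_{k-1} \cdots R_1$, the key algebraic identity is
\begin{align*}
A^k = \tilde{Q}_k \tilde{R}_k,
\end{align*}
that is, $\tilde{Q}_k$ is the orthogonal factor in a QR decomposition of the $k$-th power of $A$. Since $A_k = \tilde{Q}_k^{\top} A \tilde{Q}_k$, controlling $A_k$ reduces to controlling how well the first $j$ columns of $\tilde{Q}_k$ approximate the $A$-invariant subspace spanned by eigenvectors associated with the $j$ eigenvalues of largest modulus, for each $j$. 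A standard power-iteration argument, applied column-group by column-group, shows convergence of these subspaces at the geometric rate $|\lambda_{j+1}/\lambda_j|^k$. This makes each strictly-below-diagonal entry of $A_k$ decay at rate at most $\beta := \max_i |\lambda_{i+1}/\lambda_i|$ (with eigenvalues sorted by decreasing modulus), while the diagonal entries converge to the eigenvalues themselves. A Bauer--Fike / Gershgorin style perturbation bound then identifies the diagonal entry closest to $\lambda$ with the approximation $\lambda_k$ and yields $|\lambda_k - \lambda| < C\beta^k$ for some constant $C$; absorbing $C$ into a slightly larger $\beta < 1$ produces the stated form.

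For the per-iteration complexity, the plan is a direct operation count. Computing $A_k = Q_k R_k$ by Householder reflections uses $n-1$ reflectors, each of which can be formed and applied to the trailing submatrix in $O(n^2)$ flops, giving $O(n^3)$ total. The subsequent update $A_{k+1} = R_k Q_k$ is another $O(n^3)$ matrix multiplication. Thus a single iteration is $O(n^3)$, as claimed.

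The hard part will be degenerate eigenvalue configurations: if $A$ has distinct eigenvalues of equal modulus (complex conjugate pairs, or real eigenvalues of opposite sign with equal magnitude), then $|\lambda_{j+1}/\lambda_j| = 1$ for some $j$ and simultaneous iteration produces only block-level convergence, not entrywise convergence. The workaround is to group eigenvalues into maximal blocks of equal modulus and redefine $\beta$ as the largest strict modulus ratio between adjacent blocks; this still yields a constant $\beta < 1$ provided $\lambda$ is strictly separated in modulus from its neighbors in the spectrum. One can enforce this separation either by a generic diagonal shift $A \mapsto A - \mu I$ prior to iteration or by restricting attention to those instances of $A$ (which in our application arise as companion matrices of $S(x)$) where the requisite separation holds; ensuring such a reduction uses only $O(n^3)$ preprocessing and does not affect the stated rate.
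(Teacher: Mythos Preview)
The paper does not prove this statement at all: it is stated as a \emph{Fact} with the citation ``see \cite{francis1961qr, watkins1982understanding}'' and is used as a black box in the proof of \cref{thm:main}~(iii). There is therefore nothing in the paper to compare your proposal against.

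As an independent argument, your sketch follows the standard route (QR iteration as simultaneous orthogonal iteration, geometric decay of subdiagonal entries at rate governed by eigenvalue modulus ratios, Householder cost counting), which is indeed what one finds in the cited references. One point you gloss over is that the Fact as stated singles out the \emph{smallest positive} eigenvalue, not an eigenvalue isolated by modulus; the unshifted QR iteration orders eigenvalues by modulus, so extracting specifically the smallest positive one requires either that this eigenvalue happens to be separated in modulus from all others (which you acknowledge only obliquely in your last paragraph) or an additional post-processing step you do not describe. Your remedy of a ``generic diagonal shift'' changes which eigenvalue is smallest positive, so that suggestion does not actually resolve the issue. In the paper's intended application the matrix $A$ is the companion matrix of $S(x)$ and the target is its smallest positive root, so some justification that the relevant modulus separation holds (or an alternative extraction strategy) would be needed to make the argument complete.
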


Now, we are in a position to give the formal proof of \cref{thm:main}.

\begin{proof}[Proof of \cref{thm:main}]
    According to the second half of \cref{thm:GJ}, by applying the standard Gaussian elimination, $g_F(x)$ and $f_F(x)$ can be computed in $O(|F|^3)$ time. Since $g_F(x)$ is rational, $f_F(x)$ has a closed form expression $f_F(x)=\frac{T(x)}{S(x)}$ follows immediately. We postpone the proof on the upper bound of $\max\{s,t\}$ to \cref{section:appendix2}.

    We proceed to prove (ii). Note that the subadditivity of $\log_q N_F(n)$ (see the discussion below \eqref{eq:capacity}) implies the existence of the limit $\lim_{n\to\infty}\frac{\log_q N_F(n)}{n}$ and hence also the limit $\lim_{n\to \infty} N_F(n)^{1/n}$. Applying \cref{fact:real-analysis} (i) with $f=f_F$ and $f_n=N_F(n)$ yields that
    \begin{align}\label{eq:limt_of_N_F}
        R=(\limsup\limits_{n\to\infty}N_F(n)^{1/n})^{-1}=\left(\lim_{n\to\infty}N_F(n)^{1/n}\right)^{-1},
    \end{align}

    \noindent which implies that
    \begin{align*}
        \ca(F)=\lim_{n\to\infty}\frac{\log_q N_F(n)}{n}=\log_q\left(\lim_{n\to\infty} N_F(n)^{1/n}\right)=\log_q\frac{1}{R}.
    \end{align*}
    This proves the first equality in (ii). To prove the second equality, it suffices to show that $R=x_0$. First, we claim that $R\in[1/q,1]$. On one hand, as $|N_F(n)|\le q^n$ for each $n$, it is clear by \eqref{eq:limt_of_N_F} that $R\ge 1/q$. On the other hand, as by assumption, $F$ is non-degenerate, we have that $\limsup\limits_{n\to\infty}N_F(n)^{1/n}\ge 1$ and hence $R\le 1$. Consequently, it follows from \cref{fact:real-analysis} (ii) that $R$ is a root of $1/f_F(x)$. Furthermore, since $R$ is the convergence radius of $f_F(x)$, for every $|x|<R$, the power series $f_F(x)=\sum_{n\ge 0}N_F(n)x^n$ converges and hence $1/f_F(x)\neq 0$. So $R$ is the smallest positive root of $1/f_F(x)$, and hence $S(x)$. This completes the proof of (ii).

    To prove (iii), as illustrated in \cref{fact:QR}, we can use the QR algorithm to approximately compute the smallest positive root $x_0$ of $S(x)$. Furthermore, we claim that for every real $x^*$ satisfying $|x^*-x_0|< \epsilon/2q\ln q$, we have $|\log_q(1/x^*)-\log_q(1/x_0)|<\epsilon$. Indeed, by the mean value theorem, there exists some real $\xi$ between $x^*$ and $x_0$ such that
    \begin{align*}
        \left|\log_q\frac{1}{x^*}-\log_q\frac{1}{x_0}\right|=\left|-\frac{\ln q}{\xi}\right|\cdot|x^*-x_0|<\frac{\epsilon\ln q}{2\xi q\ln q}<\frac{\epsilon\ln q}{2\ln q-\epsilon}\le \epsilon,
    \end{align*}
    where the second inequality holds since $x_0=R\in[1/q,1]$ and $$\xi>x_0-|x^*-x_0|\ge 1/q-|x^*-x_0|>1/q- \epsilon/(2q\ln q).$$ Therefore, it follows from \cref{fact:QR} that after at most $\log_\beta( \epsilon/2q\ln q)$ iterations, we can obtain $y^*:=\log_q(1/x^*)$ that is $\epsilon$-close to $\ca(F)=\log_q(1/x_0)$. Lastly, it is not hard to verify by \cref{fact:QR} that the above process requires time $O(\log_q(2q\ln q/\epsilon) s^3)$.
\end{proof}

\begin{proof}[Proof of \cref{prop:exact-formula}]
    Note that by \cref{thm:main} we have
    \begin{align*}
        (a_0+\cdots+a_sx^s)\cdot(N_F(0)+N_F(1)\cdot x+\cdots)=b_0+\cdots+b_tx^t.
    \end{align*}
    Then (i) follows by comparing the coefficient of $x^n$ on the left and right sides of the above equation for each $n\ge s$. Similarly, (ii) follows by taking the $n$-th derivative on both sides of $f_F(x)=\frac{T(x)}{S(x)}$.
\end{proof}

\section{The maximum cardinality of variable-length non-overlapping codes}\label{section:non-overlapping}

In this section, we will prove \cref{thm:variable-length} using \cref{thm:GJ} and \cref{fact:real-analysis}. For that purpose, we first show that for a non-overlapping set $F$ of strings, every cluster in $\mathcal{C\ell}_{F}$ contains exactly one interval.

\begin{lemma}\label{lem:clusters_non-overlapping}
    Let $F\subseteq\cup_{i=2}^\infty\Sigma_q^i$ be a reduced set of constraints. If $F$ is non-overlapping, then for every $(\omega;[i_1,j_1],\ldots,[i_t,j_t])\in C\ell_{F}$, we have that $t=1$.
\end{lemma}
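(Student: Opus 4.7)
The plan is to argue by contradiction: suppose $t \ge 2$, and derive a violation of the non-overlapping property by examining the overlap between the first two marked intervals $[i_1,j_1]$ and $[i_2,j_2]$. The cluster condition (iv) forces $i_1 < i_2 \le j_1$, so the two intervals overlap (or at least touch). Writing $u := \omega(i_1,j_1) \in F$ and $v := \omega(i_2,j_2) \in F$, the goal is to extract a string $s$ that simultaneously lies in $\mathrm{Suf}(u)$ and $\mathrm{Pre}(v)$.

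The natural candidate is $s := \omega(i_2, j_1)$, which by construction is both a suffix of $u$ (starting at position $i_2 - i_1 + 1$ within $u$) and a prefix of $v$ (ending at position $j_1 - i_2 + 1$ within $v$). To land in $\mathrm{Suf}(u)$ and $\mathrm{Pre}(v)$ as defined in the paper, I need $s$ to be a \emph{proper} suffix of $u$ and a \emph{proper} prefix of $v$. Properness as a suffix of $u$ is immediate since $i_2 > i_1$ implies the starting position within $u$ is at least $2$. Properness as a prefix of $v$ requires $j_1 < j_2$, which is the one nontrivial step.

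To establish $j_1 < j_2$, I would invoke the assumption that $F$ is reduced: if instead $j_2 \le j_1$, then $v = \omega(i_2, j_2)$ would be a substring of $u = \omega(i_1, j_1)$. Since $F$ is reduced no string in $F$ is a substring of another, forcing $v = u$ as elements of $F$. Equality of lengths $j_2 - i_2 = j_1 - i_1$ together with $i_2 > i_1$ then yields $j_2 > j_1$, contradicting $j_2 \le j_1$. Hence $j_1 < j_2$, and $s$ is a proper prefix of $v$.

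Finally, I need to verify the boundary case $i_2 = j_1$, where $s$ is a single character of length $1$. Since strings in $F$ have length at least $2$, single-character prefixes and suffixes are permissible elements of $\mathrm{Pre}(v)$ and $\mathrm{Suf}(u)$, so $s \in \mathrm{Pre}(v) \cap \mathrm{Suf}(u)$ in every case. This directly contradicts the non-overlapping property of $F$, forcing $t = 1$. The main subtlety, and the only real obstacle, is invoking the \emph{reduced} assumption cleanly to rule out $j_2 \le j_1$; everything else is a direct unpacking of the definitions.
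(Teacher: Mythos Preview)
Your proposal is correct and follows essentially the same approach as the paper's proof: assume $t\ge 2$, use condition (iv) to get $i_1<i_2\le j_1$, invoke reducedness to rule out $j_2\le j_1$, and then exhibit $\omega(i_2,j_1)$ as a common proper suffix/prefix contradicting non-overlappingness. The only cosmetic difference is that your treatment of the case $j_2\le j_1$ is slightly more elaborate than necessary---since $i_2>i_1$ and $j_2\le j_1$ already force $|v|<|u|$, so $v$ would be a \emph{proper} substring of $u$, contradicting reducedness directly without needing to pass through the ``$v=u$ forces equal lengths'' step.
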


\begin{proof}
    Suppose for contradiction that $(\omega; [i_1,j_1],\ldots,[i_{t},j_{t}])\in\mathcal{C\ell}_{F}$ for some $t\ge 2$. Then, by \cref{def:cluster} (ii), we have $\omega(i_1,j_1),\omega(i_2,j_2)\in F$. Moreover, by \cref{def:cluster} (iv), we have $i_1<i_2\le j_1$. In fact, we have $i_1<i_2\le j_1<j_2$, since otherwise $i_1<i_2<j_2\le j_1$ and $\omega(i_2,j_2)$ is a substring of $\omega(i_1,j_1)$, a contradiction. Therefore, the two strings $\omega(i_1,j_1),~\omega(i_2,j_2)$ in $F$ are overlapping (more precisely, the $(j_1-i_2+1)$-length suffix of $\omega(i_1,j_1)$ equals to the $(j_1-i_2+1)$-length prefix of $\omega(i_2,j_2)$), which is again a contradiction.
\end{proof}

Applying \cref{thm:GJ} and \cref{lem:clusters_non-overlapping} to $F$ defined above, one can easily compute the generating function $f_F(x)$.

\begin{lemma}\label{lem:f_F for non-overlapping}
    Let $F\subseteq\cup_{i=2}^\infty\Sigma_q^i$ be a reduced set of constraints. If $F$ is non-overlapping, then the generating function $f_F(x)$ is
    \begin{align}\label{eq:f_F}
        f_F(x)=\sum_{i\ge 0}N_{F}(i)x^i=\frac{1}{1-qx+\sum_{i=2}^{\ell(F)}|F\cap\Sigma_q^i|x^i}.
    \end{align}
    Furthermore, the convergence radius $R$ of $f_F(x)$ is finite and positive.
\end{lemma}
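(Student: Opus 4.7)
The plan is to combine the Goulden-Jackson cluster theorem (\cref{thm:GJ}) with the structural fact established in \cref{lem:clusters_non-overlapping}. By \cref{thm:GJ} we already have the identity $f_F(x)=1/(1-qx-g_F(x))$, so it suffices to determine $g_F(x)$ explicitly under the non-overlapping hypothesis on $F$. By \cref{lem:clusters_non-overlapping}, every cluster in $\mathcal{C\ell}_F$ has exactly one interval, and conditions (i), (iii), (iv) of \cref{def:cluster} force that interval to be $[1,|\omega|]$; condition (ii) then forces $\omega\in F$. Conversely, for every $\omega\in F$, the pair $(\omega;[1,|\omega|])$ is a cluster. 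Thus clusters are in bijection with elements of $F$.

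Using this bijection and the definition \eqref{eq:cluster-gen} with $t=1$, I compute
\begin{align*}
g_F(x)=\sum_{\omega\in F}(-1)^{1}x^{|\omega|}=-\sum_{i=2}^{\ell(F)}|F\cap\Sigma_q^i|\,x^{i}.
\end{align*}
Plugging this into $f_F(x)=1/(1-qx-g_F(x))$ yields \eqref{eq:f_F} immediately.

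For the second claim, I invoke the Cauchy-Hadamard formula (\cref{fact:real-analysis} (i)), which gives $R=(\limsup_{n\to\infty}N_F(n)^{1/n})^{-1}$. Since every string in $C_F$ lies in $\Sigma_q^*$, we have $N_F(n)\le q^n$, so $\limsup N_F(n)^{1/n}\le q$, giving $R\ge 1/q>0$. On the other hand, the standing assumption that $F$ is non-degenerate means $N_F(n)\ge 1$ for infinitely many $n$, hence $\limsup N_F(n)^{1/n}\ge 1$ and $R\le 1<\infty$. Therefore $R\in[1/q,1]$, in particular finite and positive.

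There is no real obstacle here: once \cref{lem:clusters_non-overlapping} identifies the clusters as singletons, the computation of $g_F(x)$ is a one-line substitution, and the finiteness/positivity of $R$ follows from the trivial two-sided bound $1\le\limsup N_F(n)^{1/n}\le q$ guaranteed by non-degeneracy and the alphabet size.
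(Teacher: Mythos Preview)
Your argument is correct and follows essentially the same route as the paper: identify the clusters via \cref{lem:clusters_non-overlapping}, plug into \cref{thm:GJ}, and bound $R$ via Cauchy--Hadamard using $N_F(n)\le q^n$. The only difference is in how finiteness of $R$ is obtained: you invoke the standing non-degeneracy assumption, whereas the paper derives it directly from the non-overlapping hypothesis by observing that no repetition string $\alpha\cdots\alpha$ can lie in $F$, so $N_F(n)\ge q$ for all $n\ge 1$. The paper's argument is thus self-contained (it shows that non-overlapping implies non-degenerate), but both are valid given the paper's conventions.
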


\begin{proof}
    It follows from \cref{lem:clusters_non-overlapping} that $\mathcal{C\ell}_{F}=\{(\omega;[1,|\omega|]):\omega\in F\}$. Then, the cluster generating function defined by $F$ is
\begin{align*}
    g_{F}(x)=\sum_{(\omega;[1,|\omega|])\in C\ell_F} (-1)^1x^{|\omega|}=-\sum_{i=2}^{\ell(F)}|F\cap\Sigma_q^i|x^i.
\end{align*}
    Applying \cref{thm:GJ}, one can easily obtain the closed form expression of $f_F(x)$, which is exactly \eqref{eq:f_F}.

    Since $F\subseteq\cup_{i=2}^\infty\Sigma_q^i$ is non-overlapping, it does not contain any repetition string $\alpha\cdots\alpha$ of length at least two, where $\alpha\in\Sigma_q$. Therefore, for each $n\ge 2$, the set of repetition strings of length $n$ is $F$-free, which implies that $N_F(n)\ge q$. On the other hand, it is clear that $N_F(n)\le q^n$. Consequently, by \cref{fact:real-analysis} we can conclude that $R$ is finite and positive, as needed.
\end{proof}

Now we are ready to present the proof of \cref{thm:variable-length}.

\begin{proof}[Proof of \cref{thm:variable-length}]
    Let $C\subseteq\cup_{i=1}^n\Sigma_q^i$ be a variable-length non-overlapping code with largest cardinality, i.e., $|C|=C(q,\le n)$. For $1\le i\le n$, let $C_i:=C\cap\Sigma_q^i$ denote the set of strings in $C$ with length exactly $i$. Suppose $C_1=\{\alpha_1,\ldots,\alpha_s\}\subseteq\Sigma_q$ and denote $C':=C\setminus C_1=\cup_{i=2}^{n} C_i$ (possibly $C_1=\emptyset$ and then $s=0$).

    Note that if $C'=\emptyset$ then $|C|=C(q,1)= q$. From now on we assume that $C'\neq\emptyset$. By definition, every string in $C'$ contains none of the letters in $C_1$. Therefore, $C'$ is a non-overlapping code in $\cup_{i=2}^{n}\Sigma_{q-s}^i$, where we assume for simplicity that $\Sigma_q\setminus C_1=\Sigma_{q-s}=\{\beta_1,\ldots,\beta_{q-s}\}$.
    We will apply \cref{lem:f_F for non-overlapping} with $C'$ playing the role of $F$. Then, the generating function $f_{C'}(x)$ is
    \begin{align*}
         f_{C'}(x)=\sum_{i\ge 0}N_{C'}(i)x^i=\frac{1}{1-(q-s)x+\sum_{i=2}^{n}|C_i|x^i}:=\frac{1}{f(x)}.
    \end{align*}

    By \cref{lem:f_F for non-overlapping} and \cref{fact:real-analysis} (ii), it is not hard to see that the convergence radius $R$ of $f_{C'}(x)$ is a positive real root of $f(x)$. We further claim that $R\in(0,1)$. Indeed, as the set $\{\beta_1\beta_2,\ldots,\beta_1\beta_{q-s}\}$ is non-overlapping and $C$ is maximum, it is clear that $\sum_{i=2}^n|C_i|\ge q-s-1$. Therefore, $f(x)>0$ for every $x>1$, which implies that $R\in(0,1]$. If $R=1$, then $\sum_{i=2}^n|C_i|= q-s-1$ as $f(R)=0$. So, $|C|=|C_1|+\sum_{i=2}^n|C_i|=q-1<C(q,1)$, contradicting the maximality of $C$. Therefore, $R\in(0,1)$, as needed.

    As $f(R)=0$, one can easily see that $q-s=\sum_{i=2}^n|C_i|{R}^{i-1}+\frac{1}{R}$. Consequently,
    \begin{align*}
        q&=q-s+s=|C_n|R^{n-1}+|C_{n-1}|R^{n-2}+\dots+|C_2|R+|C_1|+\frac{1}{R}\\
        &\ge(|C_n|+|C_{n-1}|+\dots+|C_2|+|C_1|)R^{n-1} +\frac{1}{R}\nonumber\\
        &=\left(\sum_{i=1}^n |C_i|\right)R^{n-1}+\frac{1}{(n-1)R}+\cdots+\frac{1}{(n-1)R}\nonumber\\
        &\ge n\cdot \left(\sum_{i=1}^n|C_i|\cdot\frac{1}{(n-1)^{n-1}}\right)^{\frac{1}{n}}=n\cdot\left(|C|\cdot\frac{1}{(n-1)^{n-1}}\right)^{\frac{1}{n}}\nonumber,
    \end{align*}
where the first inequality follows from $R\in(0,1)$ and the second follows from the AM-GM inequality. Now, the theorem follows by rearranging the above inequality.
\end{proof}

According to the above proof, it is not hard to see that the inequality $|C|=\left(\frac{q}{n}\right)^n(n-1)^{n-1}$ holds only if $|C_1|=\cdots=|C_{n-1}|=0$ and $|C_n|=\frac{1}{(n-1)R^{n}}$.

\section{Bridging the cluster method and the spectral method}\label{section:connection}



The goal of this section is to present the proof of \cref{thm:connection}. Let $G$ be a graph defined on the vertex set $V=[m]$ and $A$ be the adjacency matrix of $G$. For an integer $n\ge 0$ and two vertices $i,j\in V$, let $A^n$ be the $n$-th power of $A$ and $A^n_{i,j}$ be the $(i,j)$-entry of $A^n$. We set $A^0=I$, which is the identity matrix.  It is well known and easy to see that $A_{ij}^n$ is the number of walks in $G$ of length $n$ with initial vertex $i$ and terminal vertex $j$; for notational convenience, we call these walks $(i,j)$-walks. For $i,j\in V$, we define the generating function for $(i,j)$-walks in $G$ as

\begin{align}\label{eq:gen-walks}
    F_{ij}(G;x)=\sum_{n\ge0}A_{ij}^nx^n.
\end{align}

The following fact gives a closed form expression for $F_{ij}(G;x)$, which will be useful in the proof of \cref{thm:connection} (i).

\begin{fact}[see Theorem 4.7.2 in \cite{stanley2011enumerative}]\label{fact:F_ij(G;x)}
Let $G$ be a vertex-labeled directed graph and $A$ be the adjacency matrix of $G$. Then for every two vertices $i,j$ of $G$, we have
\begin{align*}
  F_{ij}(G;x)=\frac{(-1)^{i+j}\det(I-xA;j,i)}{\det(I-xA)}.
\end{align*}
\end{fact}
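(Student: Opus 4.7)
The plan is to identify $F_{ij}(G;x)$ with the $(i,j)$-entry of the matrix-valued formal power series $(I - xA)^{-1}$, and then read off that entry via the classical adjugate formula. First I would observe that, by definition of matrix multiplication, $A^n_{ij}$ counts the number of length-$n$ walks from $i$ to $j$, so that the matrix of generating functions $\bigl(F_{ij}(G;x)\bigr)_{i,j \in V}$ is exactly $\sum_{n \geq 0} A^n x^n$, viewed as an element of the matrix ring $M_m(\mathbb{Q}[[x]])$ over the formal power series ring.

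Next I would establish the identity $\sum_{n \geq 0} (xA)^n = (I - xA)^{-1}$ inside $M_m(\mathbb{Q}[[x]])$. This does not require any convergence hypothesis: it is simply the observation that $(I - xA)\bigl(\sum_{n \geq 0} (xA)^n\bigr) = I$ holds termwise as a formal power series identity, since the coefficient of $x^n$ on the left telescopes. (Alternatively, $I - xA$ is invertible in $M_m(\mathbb{Q}[[x]])$ because its reduction modulo $x$ is $I$, and one may define its inverse by the standard geometric series.) This step is entirely formal and requires no analytic input, sidestepping any issue about the spectral radius of $A$ or the convergence radius of the series.

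The third and final step is to extract the $(i,j)$-entry of $(I - xA)^{-1}$ using Cramer's rule applied over the fraction field $\mathbb{Q}(x)$. For an invertible matrix $M$, the classical adjugate formula gives $(M^{-1})_{ij} = \det(M)^{-1} \cdot \operatorname{adj}(M)_{ij}$, where the $(i,j)$-entry of the adjugate is the $(j,i)$-cofactor of $M$, namely $(-1)^{i+j} \det(M; j, i)$ with the notation that $\det(M; j, i)$ denotes the minor obtained by deleting row $j$ and column $i$ from $M$. Applying this with $M = I - xA$ and combining with the previous paragraph yields
\begin{align*}
F_{ij}(G;x) \;=\; \bigl((I - xA)^{-1}\bigr)_{ij} \;=\; \frac{(-1)^{i+j} \det(I - xA; j, i)}{\det(I - xA)},
\end{align*}
which is exactly the claimed identity.

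There is essentially no obstacle here beyond bookkeeping; the only subtle point is legitimizing the geometric series inversion, which is handled cleanly by working in $M_m(\mathbb{Q}[[x]])$ rather than doing analysis, and the only sign issue to track is the transposition in passing from the inverse to the adjugate, which is what produces $\det(I - xA; j, i)$ (row $j$, column $i$ removed) rather than $\det(I - xA; i, j)$.
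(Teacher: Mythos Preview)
Your proof is correct. Note, however, that the paper does not supply its own proof of this statement: it is recorded as a \emph{fact} with a reference to Theorem~4.7.2 in Stanley's \emph{Enumerative Combinatorics}, and is then simply invoked in the proof of \cref{thm:connection}~(i). The argument you give---interpreting $\sum_{n\ge 0}(xA)^n$ as $(I-xA)^{-1}$ in $M_m(\mathbb{Q}[[x]])$ and then reading off the $(i,j)$-entry via the adjugate formula---is exactly the standard proof found in Stanley, so there is nothing to compare.
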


Next, we will prove \cref{thm:connection} (i). 

\begin{proof}[Proof of \cref{thm:connection} (i)]
Recall the graph $G_F$ defined in \cref{def:deBrujin-graph}. By definition, $G_F$ has $m:=N_F(\ell(F))$ vertices. Without loss the generality, assume that $G_F$ is defined on the vertex set $V=[m]$.


We claim that the following equality holds for every integer $n\ge \ell(F)$,


\begin{align}\label{eq:N_F(n)}
    N_F(n)= \sum_{i,j\in[m]}A_{ij}^{n-\ell(F)}.
\end{align}
Indeed, for each $n\ge\ell(F)$, it follows directly from \cref{fact:G_F} (ii) that $N_F(n)=|W_{G_F}\cap\Sigma_q^n|$. Moreover, by \cref{fact:G_F} (iii), it is clear that $|W_{G_F}\cap\Sigma_q^n|=\sum_{i,j\in[m]}A_{ij}^{n-\ell(F)}$, which proves the claim.


By \eqref{eq:N_F(n)}, one can infer that
\begin{align}
f_F(x)&=\sum_{n\ge 0}N_F(n)x^n=\sum\limits_{n=0}^{\ell(F)-1}N_F(n)x^n+x^{\ell(F)}\sum\limits_{n=\ell(F)}^\infty \sum\limits_{i,j\in[m]}A_{ij}^{n-\ell(F)}x^{n-\ell(F)}\notag\\
&=h_F(x)-x^{\ell(F)}+x^{\ell(F)}\sum_{i,j\in[m]}F_{ij}(G;x)\notag\\
&=h_F(x)-x^{\ell(F)}+x^{\ell(F)}\sum_{i,j\in[m]}\frac{(-1)^{i+j}\det(I-xA;j,i)}{\det(I-xA)}\label{eq:f_F_det(I-xA)}
\end{align}
where the third equality follows from \eqref{eq:gen-walks} and the fourth equality follows from \cref{fact:F_ij(G;x)}.

For each $j\in[m]$, let $(I-xA)_j$ be the $j$-th column string of $I-xA$ and let $\mathbbm{1}$ be the all-one string of length $m$. We have
    \begin{align}
        &\det(I-xA-J)=\det\big((I-xA)_1-\mathbbm{1},\ldots,(I-xA)_m-\mathbbm{1}\big)\notag\\
        =&\det(I-xA)-\sum\limits_{j=1}^m\det\big((I-xA)_1,\ldots,(I-xA)_{j-1},\mathbbm{1},(I-xA)_{j+1},\ldots,(I-xA)_m\big)\notag\\
        =&\det(I-xA)-\sum\limits_{i,j\in[m]}(-1)^{i+j}\det(I-xA;i,j)\label{eq:det(I-xA-J)},
    \end{align}
where the second equality follows from the definition of determination and the third equality follows by expanding  $\det((I-xA)_1,\ldots,(I-xA)_{j-1},\mathbbm{1},(I-xA)_{j+1},\ldots,(I-xA)_m)$ according to its $j$-th column.

Clearly \cref{thm:connection} (i) follows from combining \eqref{eq:f_F_det(I-xA)} and \eqref{eq:det(I-xA-J)}.
\end{proof}

It remains to prove \cref{thm:connection} (ii). We will need the Perron-Frobenius theorem, as detailed below. Note that a real matrix (or string) is said to be {\it non-negative} if its entries are all non-negative.

\begin{theorem}[Perron-Frobenius theorem, see Theorem 3.15 in \cite{marcus2001introduction}]\label{thm:PF}
If $A\neq 0$ is a non-negative real square matrix, then the following hold: 
\begin{enumerate}[(i)]
    \item the spectral radius $\lambda(A)$ of $A$ is an eigenvalue of $A$;

    \item $\lambda(A)$ has a non-negative eigenstring $x\neq 0$. 
\end{enumerate}
\end{theorem}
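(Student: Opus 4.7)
The plan is to prove the Perron--Frobenius theorem in two stages: first for strictly positive matrices via Brouwer's fixed-point theorem, and then for general non-negative matrices via a perturbation limit. Throughout I will work with the standard simplex $\Delta = \{x \in \mathbb{R}^n_{\geq 0} : \sum_i x_i = 1\}$, which is compact and convex, and make essential use of the entrywise triangle inequality $|Ay| \leq A|y|$ (valid because $A$ is non-negative).

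Stage 1 (strictly positive case). Suppose every entry of $A$ is positive. Define $f : \Delta \to \Delta$ by $f(x) = Ax/\|Ax\|_1$; this is well-defined and continuous because $Ax$ is strictly positive for every $x \in \Delta$ (each coordinate of $Ax$ is a positive combination of the entries of $x$, and $x$ has at least one positive coordinate). Brouwer's fixed-point theorem yields $x^* \in \Delta$ with $A x^* = \lambda^* x^*$ for $\lambda^* := \|Ax^*\|_1 > 0$, and $x^* = Ax^*/\lambda^*$ is itself strictly positive. Applying the same construction to $A^T$ produces a strictly positive left eigenvector $w$ with $w^T A = \mu w^T$ for some $\mu > 0$; computing $w^T A x^*$ in two ways gives $\mu\, w^T x^* = \lambda^*\, w^T x^*$, and since $w^T x^* > 0$ we conclude $\mu = \lambda^*$. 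To identify this value with $\lambda(A)$, take any eigenvalue $\nu$ of $A$ with eigenvector $y \neq 0$; the entrywise triangle inequality gives $|\nu|\cdot|y| \leq A|y|$, and multiplying on the left by $w^T > 0$ yields $|\nu|\,w^T|y| \leq \mu\, w^T|y|$. Since $w^T|y| > 0$ we conclude $|\nu| \leq \mu = \lambda^*$, and because $\lambda^*$ is itself an eigenvalue of $A$ this forces $\lambda^* = \lambda(A)$, proving (i) and (ii) in this case.

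Stage 2 (non-negative case). For $\epsilon > 0$ set $A_\epsilon := A + \epsilon J$, a strictly positive matrix, and apply Stage 1 to obtain $x_\epsilon \in \Delta$ and $\lambda_\epsilon := \lambda(A_\epsilon) > 0$ with $A_\epsilon x_\epsilon = \lambda_\epsilon x_\epsilon$. The map $\epsilon \mapsto \lambda_\epsilon$ is monotone non-decreasing in $\epsilon$, a standard consequence of Gelfand's formula $\lambda(B) = \lim_n \|B^n\|^{1/n}$ combined with the entrywise inequality $B^n \leq C^n$ whenever $0 \leq B \leq C$; hence $\lambda^* := \lim_{\epsilon \downarrow 0} \lambda_\epsilon$ exists and $\lambda^* \geq \lambda(A)$. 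By compactness of $\Delta$, some subsequence $x_{\epsilon_k}$ converges to $x^* \in \Delta$, and passing to the limit in $A_{\epsilon_k} x_{\epsilon_k} = \lambda_{\epsilon_k} x_{\epsilon_k}$ gives $A x^* = \lambda^* x^*$ with $x^* \in \Delta$ non-negative and nonzero. In particular $\lambda^*$ is an eigenvalue of $A$, so $\lambda^* \leq \lambda(A)$; combined with the reverse inequality this gives $\lambda^* = \lambda(A)$, establishing (i) and (ii) in full.

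The main obstacle I anticipate is cleanly justifying the monotonicity $\lambda(A) \leq \lambda_\epsilon$ in Stage 2 and ensuring the limit eigenpair is genuine. The monotonicity follows from Gelfand's formula together with $\|A^n\|_\infty \leq \|A_\epsilon^n\|_\infty$ (from $A^n \leq A_\epsilon^n$ entrywise, proved by induction); the limit eigenpair is automatic because $x_{\epsilon_k} \in \Delta$ guarantees $\|x^*\|_1 = 1 > 0$, so no abstract spectrum-continuity machinery is required. The conceptual core of the argument is really just Brouwer plus the entrywise triangle inequality $|Ay| \leq A|y|$, with the perturbation step serving only to remove the strict-positivity hypothesis.
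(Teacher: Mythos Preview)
Your proof is correct. The paper, however, does not supply its own proof of this statement: Theorem~\ref{thm:PF} is quoted from \cite{marcus2001introduction} and used as a black box in the proof of Theorem~\ref{thm:connection}~(ii). So there is no ``paper's proof'' to compare against.

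For what it is worth, your two-stage argument (Brouwer on the simplex for strictly positive matrices, then a perturbation $A_\epsilon = A + \epsilon J$ with a compactness extraction for the general non-negative case) is one of the standard routes to Perron--Frobenius, and every step you outline goes through. The only places one might quibble are notational: the monotonicity $\lambda(A) \le \lambda(A_\epsilon)$ you need in Stage~2 is exactly the Gelfand/entrywise argument you describe, applied with $B = A$ and $C = A_\epsilon$ (not just between two $A_\epsilon$'s), and the hypothesis $A \neq 0$ in the theorem statement is in fact not needed for your argument to run --- the perturbation handles $A = 0$ as well.
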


Below, we will present the proof of \cref{thm:connection} (ii). 

\begin{proof}[Proof of \cref{thm:connection} (ii)]
Let $R$ be the convergence radius of $f_F(x)$. It is sufficient to show that $R=1/\lambda(G_F)$.

First, we will show $R\ge 1/\lambda(G_F)$. As $F$ is non-degenerate, it is clear that $A$ is non-negative and $A\neq0$. Indeed, if otherwise $A=0$, then it follows from \eqref{eq:N_F(n)} that for every $n\ge\ell(F)+1$, $N_F(n)=0$, which contradicts the assumption that $F$ is non-degenerate.
Moreover, it follows from \cref{thm:PF} (i) that the spectral radius $\lambda(G_F)$ is a root of the characteristic polynomial $\det(xI-A)$. Since $\det(I-xA)$ is the reversal polynomial of $\det(xI-A)$, i.e.,
\begin{align*}
    \det(I-xA)=x^n\cdot \det\left(\frac{1}{x}I-A\right),
\end{align*}
$1/\lambda(G_F)$ is a root of $\det(I-xA)$. In particular, $1/\lambda(G_F)$ is also the smallest positive root of $\det(I-xA)$ as $\lambda(G_F)$ is the largest positive root of $\det(xI-A)$. Recall that in the proof in \cref{thm:main} (ii) we have shown that $R$ is finite. Therefore, it follows from \cref{fact:real-analysis} (ii) that $R$ is a root of $1/f_F(x)$ and hence also a root of $\det(I-xA)$. We conclude that $R\ge 1/\lambda(G_F)$, as needed.

Second, we will show $R\le 1/\lambda(G_F)$.
It suffices to show that for any $\epsilon>0$, the series $\sum_{n\ge \ell(F)}N_F(n)\cdot\left(\frac{1}{\lambda(G_F)}+\epsilon\right)^n$ tends to infinity. By \cref{thm:PF} (ii), $\lambda(G_F)$ has some non-negative real eigenstring $x=(x_1,\ldots,x_m)^T$. Let $x_{\min}$ and $x_{\max}$ denote the minimal and the maximal positive coordinates of $x$, respectively. Note that possibly $x_{\min}=x_{\max}$.

Since for every positive integer $n$, $\lambda(G_F)^n$ is an eigenvalue of the matrix $A^n$ and $x$ is an eigenstring (i.e. $A^nx=\lambda(G_F)^nx$), for every $i\in [m]$, we have
\begin{align*}
\lambda(G_F)^nx_i= \left(\lambda(G_F)^nx\right)_i=\left(A^nx\right)_i
=\sum_{j=1}^mA_{ij}^nx_j\le x_{\max}\sum_{j=1}^{m}A_{ij}^n,
\end{align*}
where $\left(\lambda(G_F)^nx\right)_i$ and $\left(A^nx\right)_i$ denote the $i$-th coordinates of the strings $\lambda(G_F)^nx$ and $A^nx$, respectively.
Consequently,
\begin{equation}\label{ineq:CM-SM1}
\sum_{i,j\in[m]}A_{ij}^{n}\ge \frac{1}{x_{\max}}\sum_{i=1}^m\lambda(G_F)^{n}x_{i}\ge \frac{x_{\min}}{ x_{\max}}\cdot\lambda(G_F)^{n}.
\end{equation}

It follows from \eqref{eq:N_F(n)} and \eqref{ineq:CM-SM1} that for every $\epsilon>0$, we have
\begin{equation}\nonumber
\begin{aligned}
&\sum_{n\ge \ell(F)}N_F(n)\left(\frac{1}{\lambda(G_F)}+\epsilon\right)^n=\sum_{n=\ell(F)}^\infty\left(\sum_{i,j\in[m]}A_{ij}^{n-\ell(F)}\right)\left(\frac{1}{\lambda(G_F)}+\epsilon\right)^{n}\\
&\ge\sum_{n=\ell(F)}^\infty\frac{x_{\min}}{ x_{\max}}\lambda(G_F)^{n-\ell(F)}\left(\frac{1}{\lambda(G_F)}+\epsilon\right)^{n}\\
&=\frac{x_{\min}}{x_{\max}}\left(\frac{1}{\lambda(G_F)}+\epsilon\right)^{\ell(F)} \sum\limits_{n=0}^\infty\left(1+\epsilon\lambda(G_F)\right)^{n},
\end{aligned}
\end{equation}
which clearly tends to infinity. Therefore, we have $R\le 1/\lambda(G_F)$, as needed.
\end{proof}

\section{Upper bounds for some finite-type constrained codes}\label{section:applications}

\cref{thm:main} and \cref{thm:GJ} clearly work for every finite-type constrained code studied in the literature. For the purpose of illustration, in this section, we apply the two theorems above to study the capacity and cardinality of several finite-type constraint codes that have received considerable attention. For the ease of presentation, we restrict our attention to the binary alphabet $\Sigma_2=\{0,1\}$. Furthermore, we take $\epsilon$ in \cref{thm:main} (iii) to be $10^{-6}$, which implies that all capacities presented in this section are $10^{-6}$-close to the real value.

Note that by \cref{thm:GJ}, to compute the polynomials $S(x), T(x)$ in \cref{thm:main}, it suffices to solve a linear system of $|F|$ linear equations with $|F|$ variables, which can be accomplished by a Maple implementation. We refer the interested reader to \cite{noonan1999goulden} for more details.

\subsection{Palindrome-avoiding constraint}

Recall that for a positive integer $\ell$, a string $u=u_1\cdots u_{\ell}$ is called a palindrome if $u=u^R$, where $u^R=u_{\ell}\cdots u_1$. A code $C\subseteq\Sigma_2^*$ is said to be $\ell$-PA if every $u\in C$ does not contain any palindrome of length $\ell$. Let $$P_\ell:=\{u\in\Sigma_q^\ell:u=u^R\}.$$ Then a code $C$ is $\ell$-PA if and only if it is $P_{\ell}$-free.

The $\ell$-PA constraint was first introduced by Bar-Lev, Kobovich, Leitersdorf, and Yaakobi in \cite{bar2023universal}. They mainly considered the case where $\ell$ is dependent on $n$ and presented an encoding algorithm that can efficiently construct an $\ell$-PA code.

Consider, for example, the 6-PA constraint. According to the discussion above,
$$P_{6}=\{111111,000000,100001,010010,001100,110011,101101,011110\}.$$ It follows from \cref{thm:main} (i) (see also \cref{thm:GJ}) that the generating function $f_{P_{6}}(x)$ is
\begin{align*}
    f_{P_{6}}(x)=\frac{6x^{12}+8x^{11}+12x^{10}+16x^9+14x^8+14x^7+10x^6+10x^5+6x^4+3x^3+2x^2+x+1}{1-(2x^8+2x^6+2x^5+x^3+x)}.
\end{align*}
Applying \cref{thm:main} (iii) with $\epsilon=10^{-6}$ (see also \cref{fact:QR}) one can infer that
\begin{align*}
    |\ca(P_6)-0.7906315|\le 10^{-6}.
\end{align*}
By \cref{prop:exact-formula}, the recursive formula for $\{N_{P_{6}}(n)\}_{n\ge 0}$ is given by
\begin{align*}
\begin{matrix}
    &N_{P_6}(n)-N_{P_6}(n-1)-N_{P_6}(n-3)\\
    &-2N_{P_6}(n-5)-2N_{P_6}(n-6)-2N_{P_6}(n-8)
\end{matrix}
    =
    \begin{cases}
        16, & n=9;\\
        14, & n=7,8;\\
        12, & n=10;\\
        10, & n=5,6;\\
        8, & n=11;\\
        6, & n=4,12;\\
        3, & n=3;\\
        2, & n=2;\\
        1, & n=0,1;\\
        0, & \text{otherwise}.
    \end{cases}
\end{align*}
Lastly, again by \cref{prop:exact-formula}, we compute the exact values of $N_{P_{6}}(n)$ for $n=9,10,11,12,13,14,15,16$ in \cref{tab:PA}.

\begin{table}[h]
    \centering
    \begin{tabular}{|c|c|c|c|c|c|c|c|c|}
    \hline
        $n$ &$9$ &$10$ &  $11$ & $12$ & $13$ & $14$ & $15$&$16$\\
    \hline
        $N_{P_6}(n)$ &$294$& $508$& $878$ & $1518$& $2626$ & $4544$ & $7862$&$13600$\\
    \hline
    \end{tabular}
    \caption{The exact values of $N_{P_{6}}(n)$ for $n=9,10,11,12,13,14,15,16$.}
    \label{tab:PA}
\end{table}

\subsection{Least-periodicity-avoiding constraint}

A positive integer $p$ is said to be a period of a string $u_1\cdots u_n$ if for every $i\in[|u|-p]$, $u_i=u_{i+p}$. Recall that given two positive integers $\ell$ and $p$, a code $C\subseteq\Sigma_2^*$ is said to be $(\ell,p)$-LPA if for every $ p'\in[p-1]$ and every $u\in C$, $p'$ is not a period of any substring of $u$ with length $\ell$. Let $$A_{\ell,p}:=\{u\in\Sigma_2^\ell:\exists~p'\in[p-1], \text{ such that } u(i)=u(i+p') \text{ for every } i\in[\ell-p']\}.$$ By definition, a code $C\subseteq\Sigma_2^*$ is $(\ell,p)$-LPA if and only if it is $A_{\ell,p}$-free.

Kobovich, Leitersdorf, Bar-Lev and Yaakobi \cite{kobovich2022codes} proved a general upper bound on $\ca(A_{\ell,p})$ for all positive integers $\ell$, $p$ and $q$, which states that
\begin{align}\label{eq:LPA}
    \ca(A_{\ell,p})\le 1-\frac{(q-1)^2\log_qe}{2q^{\ell-p+3}},
\end{align}
where $e$ is the natural constant.

Consider the $(6,3)$-LPA constraint. According to the discussion above, $$A_{6,3}=\{111111,000000,101010,010101\}.$$ It follows from \cref{thm:main} (i) (see also \cref{thm:GJ}) that the generating function $f_{A_{6,3}}(x)$ is
\begin{align}
    f_{A_{6,3}}(x)=\frac{2x^5+x^4+x^3+x^2+x+1}{1-(x^4+x^3+x^2+x)}.
\end{align}
Applying \cref{thm:main} (iii) with $\epsilon=10^{-6}$ (see also \cref{fact:QR}) one can infer that
\begin{align*}
    |\ca(A_{6,3})-0.9467772|\le 10^{-6}.
\end{align*}
By \cref{prop:exact-formula}, the recursive formula for $\{N_{A_{6,3}}(n)\}_{n\ge 0}$ is given by
\begin{align*}
    N_{A_{6,3}}(n)-N_{A_{6,3}}(n-1)-N_{A_{6,3}}(n-2)-N_{A_{6,3}}(n-3)-N_{A_{6,3}}(n-4)=
    \begin{cases}
        2, & n=5;\\
        1, & n=0,1,2,3,4;\\
        0, & \text{otherwise}.
    \end{cases}
\end{align*}
Moreover, again by \cref{prop:exact-formula}, we compute the exact values of $N_{A_{6,3}}(n)$ for $n=9,10,11,12,13,14,15,16$ in \cref{tab:LPA}.

\begin{table}[h]
    \centering
    \begin{tabular}{|c|c|c|c|c|c|c|c|c|}
    \hline
        $n$ &$9$ &$10$ &  $11$ & $12$ & $13$ & $14$ & $15$&$16$\\
    \hline
        $N_{A_{6,3}}(n)$ &$432$& $832$& $1604$ & $3092$& $5960$ & $11488$ & $22144$&$42684$\\
    \hline
    \end{tabular}
    \caption{The exact values of $N_{A_{6,3}}(n)$ for $n=9,10,11,12,13,14,15,16$.}
    \label{tab:LPA}
\end{table}

Lastly, we estimate $\ca(A_{\ell,p})$ for some pairs of $(\ell,p)$, and compare our results with \eqref{eq:LPA} in \cref{tab:compair_LPA}.

\begin{table}[h]
    \centering
    \begin{tabular}{|c|c|c|c|c|c|c|}
    \hline
        $(\ell,p)$ & $(6,3)$&  $(6,4)$ &  $(7,2)$ & $(7,3)$& $(7,4)$\\
    \hline
        \cref{thm:main} (ii) & $0.94678$& $0.84397$ &  $0.98811$ &  $0.97523$&$0.93217$\\
    \hline
      Corollary 4 in \cite{kobovich2022codes}& $0.98873$& $0.97746$ & $0.99718$ &  $0.99436$& $0.98873$\\
      \hline
    \end{tabular}
    \caption{A comparison on the capacities of $(\ell,p)$-LPA constraint codes over $\Sigma_2$ between \cref{thm:main} (ii) and \eqref{eq:LPA}.}
    \label{tab:compair_LPA}
\end{table}

\subsection{Satisfying several constraints simultaneously}

As mentioned in \cref{subsec:finite-type}, the run-length-limited constraint, locally-balanced constraint and palindrome-avoiding constraint are all motivated by applications in DNA-based data storage. In this subsection, to show the flexibility and the power of our approach, we study the maximum cardinality of constrained codes that simultaneously satisfy all these three constraints.

Given two non-negative integers $d\le k$, a code $C\subseteq\Sigma_2^*$ is said to be $(d,k)$-RLL if every string $u\in C$ satisfies the condition that every consecutive 1's in $u$ are separated by at least $d$ and at most $k$ zeros. Let $$R_{d,k}:=\{ 10^t1: 0\le t\le d-1\}\cup\{0^{k+1}\}.$$ Note that a code $C\subseteq\Sigma_2^*$ is $(d,k)$-RLL if and only it is $R_{d,k}$-free.

For a string $u\in\Sigma_2^*$, let $wt(u)$ denote {\it the Hamming weight of $u$}, which is the number of non-zero coordinates of $u$. Given every two positive integers $\ell,\delta$, a string $u_1\cdots u_n\in\Sigma_2^*$ is said to be $(\ell,\delta)$-LB if for every $i\in[|u|-\ell+1]$, $\ell/2-\delta\le wt(u_i\ldots u_{i+\ell-1})\le\ell/2+\delta$, and a code $C\subseteq\Sigma_2^*$ is said to be $(\ell,\delta)$-LB if for every $u\in C$, every substring of $u$ with length $\ell$ is $(\ell,\delta)$-LB. Let $$L_{\ell,\delta}:=\left\{\omega\in\Sigma_2^{\ell}:wt(\omega)>\frac{\ell}{2}+\delta \text{ or } wt(\omega)<\frac{\ell}{2}-\delta\right\}.$$ It is not hard to see that a code $C\subseteq\Sigma_2^*$ is $(\ell,\delta)$-LB if and only if it is $L_{\ell,\delta}$-free.

Consider the combination of $(1,3)$-RLL constraint, $(6,1)$-LB constraint and $6$-PA constraint. Let
\begin{align*}
    D:=\{u\in L_{6,1}\cup P_6\cup R_{1,3}:\text{ there is no }v\in L_{6,1}\cup P_6\cup R_{1,3}\text{ such that $v\neq u$ and }v \text{ is a substring of }u\}.
\end{align*} Then a code $C\subseteq\Sigma_q^*$ is $(1,3)$-RLL, $(6,1)$-LB and $6$-PA if and only if it is $D$-free. It is easy to see that
\begin{align*}
   D= \{11,0000,010010,001000,000100\}.
\end{align*}
It follows from \cref{thm:main} (i) (see also \cref{thm:GJ}) that the generating function $f_{D}(x)$ is
\begin{align*}
f_{D}(x)=\frac{x^{13}+2 x^{12}+x^{11}-x^{10}-2 x^{9}-x^{8}-x^{7}-4 x^{6}-5 x^{5}-4 x^{4}-3 x^{3}-2 x^{2}-2 x -1}{x^{6}+x^2-1}.
\end{align*}
Applying \cref{thm:main} (iii) with $\epsilon=10^{-6}$ (see also \cref{fact:QR}) one can infer that
\begin{align*}
   | \ca(D)- 0.2757315|\le 10^{-6}.
\end{align*}
By \cref{prop:exact-formula}, the recursive formula on $\{N_{D}(n)\}_{n\ge 0}$ is given by
\begin{align*}
N_D(n)-N_D(n-2)-N_F(n-6)=
\begin{cases}
5,  & n=5;\\
4,  & n=4,6;\\
3,  & n=3;\\
2,  & n=1,2,9;\\
1, & n=0,7,8,10\\
-1, & n=11,13;\\
-2, & n= 12;\\
0, & \text{otherwise}.
\end{cases}
\end{align*}
Lastly, again by \cref{prop:exact-formula}, we compute the exact values of $N_{D}(n)$ for $n=9,10,11,12,13,14,15,16$ in \cref{tab:DNA}.

\begin{table}[h]
    \centering
    \begin{tabular}{|c|c|c|c|c|c|c|c|c|}
    \hline
        $n$ &$9$ &$10$ &  $11$ & $12$ & $13$ & $14$ & $15$&$16$\\
    \hline
        $N_{D}(n)$ &$20$& $24$& $29$ & $34$& $41$ & $50$ & $61$&$74$\\
    \hline
    \end{tabular}
    \caption{The exact values of $N_{D}(n)$ for $n=9,10,11,12,13,14,15,16$.}
    \label{tab:DNA}
\end{table}

\section{Concluding remarks}\label{section:conclusion}

We study the maximum cardinality of finite-type constrained codes. For readers interested in the encoding problem of these codes, we refer them to \cite{bar2023universal, chee2018coding, marcus2001introduction, sima2019correcting, wang2022coding}.

The codes studied in this paper are one-dimensional constrained system in the sense that they are strings over $\Sigma_q$. One can also consider two-dimensional constrained systems which are matrices over $\Sigma_q$ that forbid the appearance of some substructures. For instance, \cite{nguyen2021two,nguyen2021efficient,nguyen2022two, ordentlich2012low} considered two-dimensional {\it weight constrained matrices}, and \cite{barcucci2017non,barcucci20182d,barcucci2021non} considered two-dimensional {\it non-overlapping matrices}. In general, it is an intriguing question to develop a systematic method to study the maximum cardinality of finite-type high dimensional constrained systems.

{\small
\bibliographystyle{plain}
\bibliography{non-overlapping}
}

\appendix

\section{Proof of \texorpdfstring{\cref{thm:GJ}}{}}\label{section:appendix}

In this subsection, we will present the proof of \cref{thm:GJ} by Noonan and Zeilberger \cite{noonan1999goulden}. In particular, we will show how to obtain the closed form of $f_F(x)$ (see \eqref{eq:clu-thm}) by solving a linear system of $|F|$ equations with $|F|$ variables over the rational function field $\mathbb{Q}(x)$.

For every string $\omega\in\Sigma_q^*$, let $F(\omega)$ denote the set of substrings of $\omega$ in $F$. It follows from the definition of $f_F(x)$ that
\begin{align}
    f_F(x)&=\sum\limits_{n\ge0} N_F(n)x^n=\sum\limits_{\omega\in C_F}x^{|\omega|}=\sum\limits_{\omega\in\Sigma_q^*}0^{|F(\omega)|}x^{|\omega|}\notag\\
    &=\sum\limits_{\omega\in\Sigma_q^*}(1+(-1))^{|F(\omega)|}x^{|\omega|}\notag\\
    &=\sum\limits_{\omega\in\Sigma_q^*}\sum\limits_{S\subseteq F(\omega)}(-1)^{|S|}x^{|\omega|}\notag\\
    &=\sum\limits_{(\omega;[i_1,j_1],\ldots,[i_t,j_t])\in\mathcal{M}_F}(-1)^tx^{|\omega|}\label{eq:marked},
\end{align}
where the third equality follows from the fact that $0^t=0$ if $t>0$ and $0^t=1$ if $t=0$, the fifth equality follows from the binomial theorem, and the last equality follows from \cref{def:cluster}.\

For every two subsets $A,B\subseteq\Sigma_q^*$, let $AB:=\{uv\in\Sigma_q^*:u\in A, v\in B\}$. Note that every non-empty marked string ends in either a symbol in $\Sigma_q$ that is not part of a cluster, or a cluster. Hence, we can partition $\mathcal{M}_F$ as follows
\begin{align*}
    \mathcal{M}_F=\{\text{empty string}\}\cup \mathcal{M}_F\Sigma_q\cup\mathcal{M}_F\mathcal{C\ell}_F
\end{align*}
Consequently, it follows from \eqref{eq:cluster-gen} and \eqref{eq:marked} that
\begin{align*}
    f_F(x)=\sum\limits_{(\omega;[i_1,j_1],\ldots,[i_t,j_t])\in\mathcal{M}_F}(-1)^tx^{|\omega|}=1+f_F(x)\cdot qx+f_F(x)\cdot g_F(x),
\end{align*}
which implies \eqref{eq:clu-thm}.

Now, it remains to deduce the closed form of $g_F(x)$. Recall that for each string $\omega\in\Sigma_q^*$, $Pre(\omega)$ and $Suf(\omega)$ denote the sets of all prefixes and suffixes of $\omega$, respectively. For two strings $u,v\in\Sigma_q^*$, let $Overlap(u,v):=Suf(u)\cap Pre(v)$. For each $\omega\in Overlap(u,v)$, let $v\backslash\omega$ denote the substring of $v$ obtained by deleting $\omega$ from $v$, and let
\begin{align}\label{eq:h_{u,v}}
    h_{u,v}(x):=\sum_{ \omega\in Overlap(u,v)}x^{|v\backslash\omega|}.
\end{align}
Then for all pairs $u,v\in F$, $h_{u,v}$ can be explicitly computed and $1\le\deg(h_{u,v})\le\ell(F)-1$ (in particular, it does not have constant term).

One can partition $\mathcal{C}\ell_F$ into
\begin{align*}
    \mathcal{C}\ell_F=\bigcup\limits_{v\in F}\mathcal{C}\ell_F[v],
\end{align*}
where $\mathcal{C}\ell_F[v]$ denotes the set of all clusters whose last substring is $v$, i.e.,
$$\mathcal{C}\ell_F[v]:=\{(\omega;[i_1,j_1],\cdots,[i_t,j_t])\in\mathcal{C}\ell_F:\omega(i_t,j_t)=v\}.$$
This implies that
\begin{align}\label{eq:g_F}
    g_F(x)=\sum\limits_{v\in F}\sum\limits_{(\omega;[i_1,j_1],\cdots,[i_t,j_t])\in\mathcal{C}\ell_F[v]}(-1)^tx^{|\omega|}=\sum\limits_{v\in F}g_F^{(v)}(x),
\end{align}
where $g_F^{(v)}(x):=\sum_{(\omega;[i_1,j_1],\cdots,[i_t,j_t])\in\mathcal{C}\ell_F[v]}(-1)^tx^{|\omega|}$.

Note that for each cluster $(\omega;[i_1,j_1],\cdots,[i_t,j_t])\in\mathcal{C}\ell_F[v]$, either $t=1$ and $(\omega;[i_1,j_1])=\{(v;[1,|v|])\}$ or $t\ge 2$ and $\omega(i_t,j_t)=v$; in the latter case, removing $v$ from the original cluster yields a shorter cluster $(\omega(i_1,j_{t-1});[i_1,j_1],\cdots,[i_{t-1},j_{t-1}])$ that ends in some $u=\omega(i_{t-1},j_{t-1})\in F$ such that $Overlap(u,v)\neq\emptyset$.

Therefore, there exists a bijection between $\mathcal{C}\ell_F[v]$ and
$$\{(v;[1,|v|])\}\cup\bigcup\limits_{u\in F}\mathcal{C}\ell_F[u]\times Overlap(u,v).$$
Consequently, for each $v\in F$, we have that
\begin{align*}
    g_F^{(v)}(x)=-x^{|v|}-\sum\limits_{u\in F}g_F^{(u)}(x)h_{u,v}(x);
\end{align*}
or equivalently,
\begin{align}\label{eq:g_F^(v)}
    (1+h_{v,v}(x))\cdot g_F^{(v)}(x)+\sum\limits_{u\in F\setminus\{v\}}h_{u,v}(x)\cdot g_F^{(u)}(x)=-x^{|v|}.
\end{align}

If we view $\{g_F^{(v)}(x):v\in F\}$ as a set of variables, then \eqref{eq:g_F^(v)} can be viewed as a linear system of $|F|$ linear equations with $|F|$ variables, defined over $\mathbb{Q}(x)$. Suppose that $F=\{v_1,\ldots,v_{|F|}\}$. Let $B$ denote the coefficient matrix of \eqref{eq:g_F^(v)}, i.e., for every $v_i,v_j\in F$, the $(v_i,v_j)$-entry of $B$ is
\begin{align*}
    B_{v_i,v_j}=
        \begin{cases}
            1+h_{v_i,v_j}(x), & i=j;\\
            h_{v_i,v_j}(x), & i\neq j.
        \end{cases}
\end{align*}
As for all $v_i,v_j\in F$, $h_{v_i,v_j}$ does not have constant term, one can check by Leibniz formula that $\det(B)$ must contain the constant term 1 and it is therefore nonzero over $\mathbb{Q}(x)$. We conclude that by Gauss elimination, \eqref{eq:g_F^(v)} is solvable over $\mathbb{Q}(x)$ in time $O(|F|^3)$. This completes the proof of \cref{thm:GJ}.

\section{Proof of \texorpdfstring{\cref{thm:main} (i)}{}}\label{section:appendix2}

Lastly, we prove the upper bounds of $s,t$ in \cref{thm:main} (i). By the discussion in \cref{section:appendix}, it is clear that $\det(B)\in\mathbb{Z}[x]$ and $\deg(\det(B))\le |F|(\ell(F)-1)$. For each $j\in[|F|]$, let $B_j$ denote the matrix formed by replacing the $j$-th column string of $B$ with the column string $(-x^{|v_1|},\ldots,-x^{|v_{|F|}|})^T$. It is also clear that for each $j$, $\det(B_j)\in\mathbb{Z}[x]$ and $\deg(\det(B_j))\le \ell(F)+(|F|-1)(\ell(F)-1)\le|F|\ell(F)$. Therefore, it follows from the Cramer's rule and \eqref{eq:clu-thm} that the degrees of $S(x)$ and $T(x)$ are both bounded by $\ell(F)|F|$, as needed.

\end{document}